\newtheorem{theorem}{Theorem}
\newtheorem{definition}{Definition}
\newtheorem{lemma}{Lemma}
\newtheorem{corollary}{Corollary}
\begin{document}

\title{Computing the Best Case Energy Complexity of Satisfying Assignments in Monotone Circuits}

\author[1]{Janio Carlos Nascimento Silva}
\author[2,5]{Uéverton S.  Souza}
\affil[1]{Instituto Federal do Tocantins, Campus Porto Nacional, Porto Nacional, Brazil}
\affil[2]{Institute of Informatics, Faculty of Mathematics, Informatics and Mechanics, University of Warsaw, Warsaw, Poland}
\affil[5]{Instituto de Computação, Universidade Federal Fluminense, Niterói, Brasil}

\begin{titlepage}

\maketitle

\begin{abstract}
Measures of circuit complexity are usually ana\-lyzed to ensure the computation of Boolean functions with economy and efficiency. One of these measures is the {\em energy complexity}, which is related to the number of gates that output {\tt true} in a circuit for an assignment.
The idea behind {energy complexity} comes from the counting of `firing' neurons in a natural neural network. The initial model is based on threshold circuits, but recent works also have analyzed the energy complexity of traditional Boolean circuits.
In this work, we discuss the time complexity needed to compute the best case energy complexity among satisfying assignments of a monotone Boolean circuit, and we call such a problem as {\sc MinEC}$^+_M$.
In the {\sc MinEC}$^+_M$ problem, we are given a monotone Boolean circuit $C$, a positive integer $k$ and asked to determine whether 
there is a satisfying assignment $X$ for $C$ such that $EC(C,X) \leq k$, where $EC(C,X)$ is the number of gates that output {\tt true} in $C$ according to the assignment $X$.
We prove that {\sc MinEC}$^+_M$ is {NP-complete} even when the input monotone circuit is planar. Besides, we show that the problem is W[1]-hard but in XP when parameterized by the size of the solution. In contrast, we show that when the size of the solution and the genus of the input circuit are aggregated parameters, the {\sc MinEC}$^+_M$ problem becomes fixed-parameter tractable.
\end{abstract}

\noindent {\bf Keywords:}\ energy complexity, monotone circuit, planar, genus, FPT.

\end{titlepage}


\section{Introduction}
\label{sec:intro}

Circuit Complexity is a research field that aims to study bounds for measures (such as size and depth) of circuits that compute Boolean functions. The \emph{size} of a circuit is its number of logic gates, and its \emph{depth} is the largest path from any input to the output gate. A circuit complexity analysis can provide lower/upper bounds on circuits classes that represent classic decision problems besides the possibility to identify efficient Boolean circuits according to specific properties (see~\cite{hastad1986almost,razborov1987lower,williams2014nonuniform}). 
In addition, some important bounds are described to deal with different definitions of size~\cite{allender2010amplifying}. 
From a combinatorial point of view, several optimization problems address the minimization of measures in some circuits classes, such as the notorious \textsc{Weighted Circuit Satisfiability} problem, where the weight measures the amount of {\tt true} values assigned to the input variables. 

Despite this `zoo of measures', optimizing properties like size and depth does not always guarantee an ‘efficient' design of a specific circuit class. Depending on the purpose, a circuit with small size (either considering gates or wires) or depth can be inappropriate; such a situation was identified in~\cite{uchizawa2006computational}.
When faced with threshold circuits used as an artificial neural network, it is possible to observe a contrast with neurons of the human brain. The authors in~\cite{uchizawa2006computational} (based in neuroscience literature) argue that the activation of neurons in a human brain happens sparsely. It was shown in \cite{lennie2003cost} that the metabolic cost of a single spike in cortical computation is very high in a way that approximately $1\%$ of the neurons can be activated simultaneously.
This phenomenon happens due to the asymme\-tric energy cost between neurons activated and non-activated in natural cases.
From the other side, digital circuits, when satisfied (outputting {\tt true}), on average activate $50\%$ of the gates.  

Under different perspectives, `energy' (or `power') of a circuit is a measure that has a lot of attention in the literature. Due to multiple models (from biology, electronics, or purely theoretical), several works address different ways of analyzing the energy of a circuit. In \cite{kissin1982measuring}, the energy consumption of a circuit considers the switching energy consumed by wires (edges) and gates of VLSI circuits. In \cite{antoniadis2014energy} and \cite{barcelo2015almost}, it is analyzed the \textit{voltage-to-energy} consumed by the gates, taking into consideration the \textit{failure-to-energy}. Other different models are explored in~\cite{vsima2014energy} and \cite{blake2017vlsi}, such works try to explore concepts of energy too intrinsic to the design of practical circuits on electronics. 

In this paper, we deal with a circuit complexity measure called \emph{energy complexity} ($EC$). The idea behind this measure is to evaluate the number of gates in a circuit that returns {\tt true} for an assignment. A similar concept called `power of circuit' was studied by \cite{vaintsvaig1961power}. The term \emph{energy complexity} was introduced in \cite{uchizawa2006computational} as an alternative to the dilemma \textit{artificial vs natural} described above. In \cite{uchizawa2006computational}, the authors prove that the minimization of circuit energy complexity obtains a different structure from the minimization of classical circuit complexity measures and potentially closer to the structure of neural networks in the human brain. The authors proved initial lower bounds for energy complexity and other circuit complexity measure called {\em entropy}. For an analysis more focused on circuit complexity and recent bounds for energy complexity of Boolean functions we refer to~\cite{dinesh2020new}.

With a different perspective, this work dedicates attention to optimization and decision problems related to computing the energy complexity of a circuit.
More precisely, we consider the problem of determining the satisfying assignment with minimum energy consumption in monotone circuits, i.e., the best case energy complexity of a satisfying assignment in the class of monotone circuits, called {\sc MinEC}$^{+}_M$. Our focus is on time (parameterized) complexity of the {\sc MinEC}$^{+}_M$ problem.

The paper's sequence has some preliminaries and the problem definition, a discussion of its NP-completeness on planar circuits, a W[1]-hardness proof and an XP algorithm for the case where the number of gates to be activated is considered as parameter. Finally, we present an FPT algorithm for {\sc MinEC}$^{+}_M$ when parameterized either by the treewidth of the input circuit or by the genus of the input plus the size of the solution.

An extended abstract of this paper has been presented in AAIM 2021 \cite{silva2021energy}.

\subsection*{Preliminaries}

A \emph{Boolean circuit} is a model that computes a Boolean function $g: \{0,1\}^n \rightarrow \{0,1\}$ over a basis of operators (e.g. $\{{\tt AND,OR, NOT},...\}$). In terms of Graph Theory, a Boolean circuit is a directed acyclic graph $C(V,E)$ where the set of vertices $V = (I,G,\{v_{out}\})$ is partitioned into: 
(i) a set of inputs $I = \{i_1,i_2, \dots\}$ composed by the vertices of in-degree $0$;
(ii) a set of gates $G = \{g_1, g_2, \dots\}$,  which are vertices labeled with Boolean operators; (iii) and the single output (sink) vertex $v_{out}$ with out-degree equal to $0$ and also labeled with a Boolean operator (see Fig.~\ref{fig:1}).
The input vertices represent Boolean variables that can take values from $\{0,1\}$ ($\{ ${\tt true}$,$ {\tt false}$\}$, depending on the conventions), and the label/operator of a gate or output vertex $w$ is given by $f(w)$. A \emph{monotone circuit} is a Boolean circuit where the Boolean operators allowed are in $\{{\tt AND,OR}\}$.
An \emph{assignment} of C is a vector $X = [x_1, x_2, \dots, x_{|I|}]$ of values for the set of inputs $I$, where for each $j$, $x_j \in \{0,1\}$ is the value assigned to input $i_j$. We say that $X$ is a satisfying assignment if the circuit $C$ evaluates to $1$ ({\tt true}) when $X$ is given as input.

\begin{figure}[!ht]
\begin{subfigure}{.48\linewidth}\centering \includegraphics[width=0.8\linewidth]{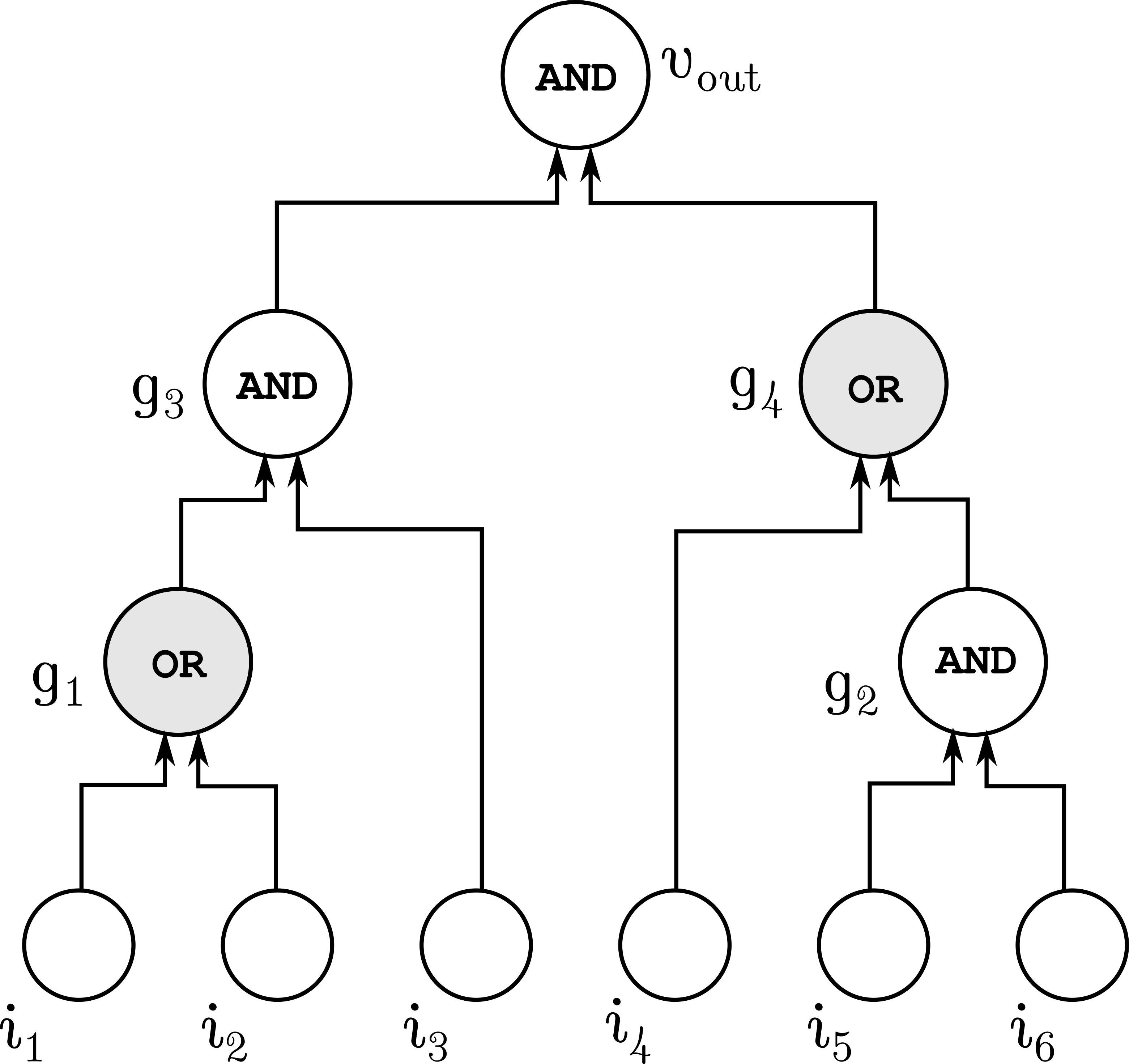} \caption{Graph representation of circuit $C$.}\label{fig:graphC}\end{subfigure}
\hfill
\begin{subfigure}{.48\linewidth}\centering \includegraphics[width=0.8\linewidth]{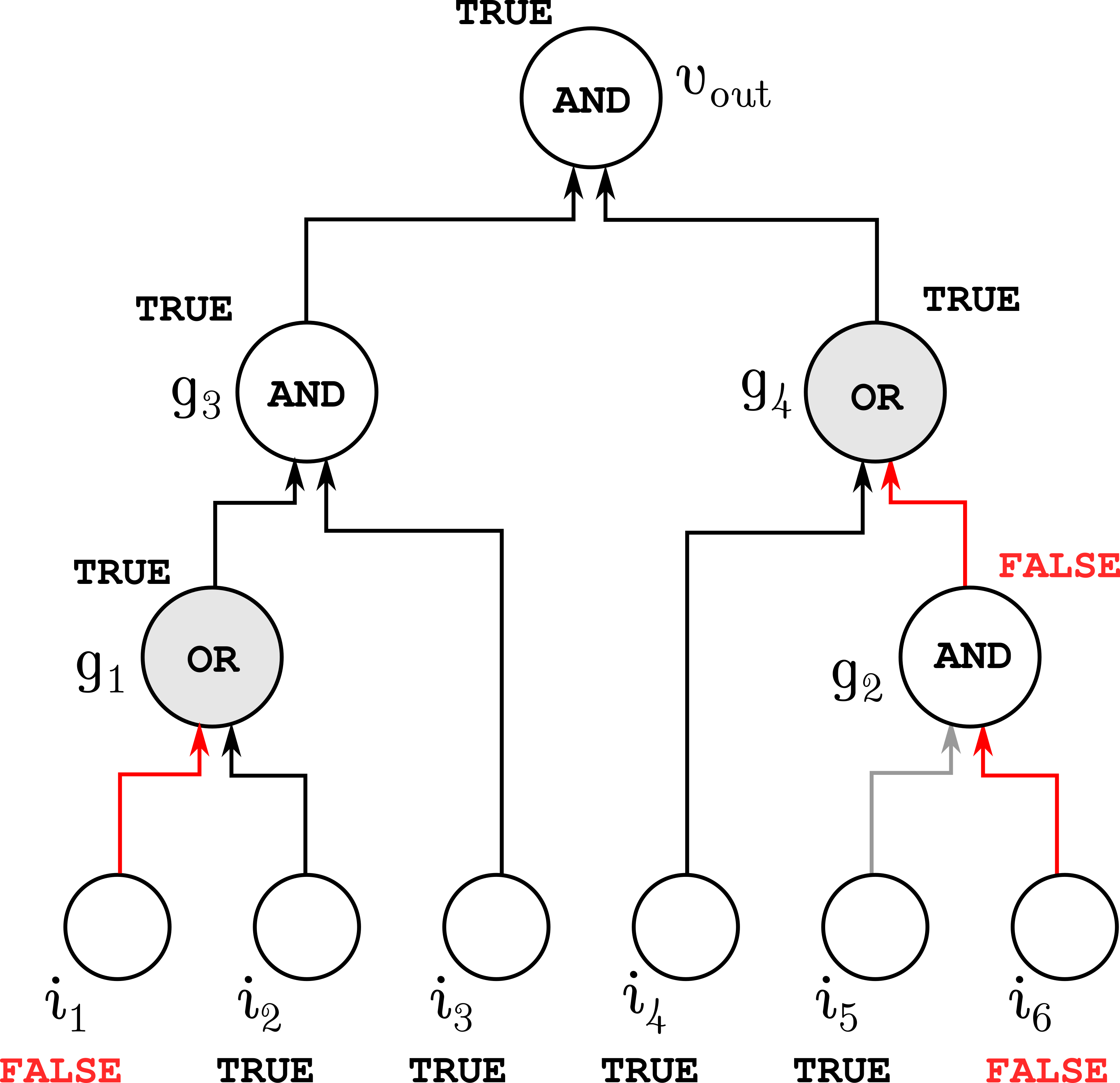} \caption{$C$ under a satisfying assignment.}\label{fig:positiveC}\end{subfigure}
\caption{Graph representation and a satisfying
assignment.}\label{fig:1}
\end{figure}

Given a Boolean circuit $C$ and an assignment $X$, the Energy Complexity of $X$ into $C$, $EC(C,X)$, is defined as the number of gates that output {\tt true} in $C$ according to the assignment $X$. The (Worst-Case) Energy Complexity of $C$ (denoted by $EC(C)$) is the maximum $EC(C, X)$ among all possible assignments $X$ (See \cite{dinesh2020new}). Analogously, the Best-Case Energy Complexity of $C$ (denoted by $MinEC(C)$) is the minimum $EC(C, X)$ among all possible assignments $X$. 

In~\cite{ALVES20211,alves2020succinct}, a measure called {\em certification-width} was described, which is the size of a minimum subset of edges that are enough to certificate a satisfying assignment. Such edges form a structure called {\em succinct certificate} that can be seen as a minimal map of edges to be followed in order to activate the output gate. Similar structures, denoted by
accepting subtree, positive proof, or solution subgraph, can also be found in~\cite{venkateswaran1992circuit,dorzweiler2016positive,and_or_graphs2,and_or_graphs1,lima2018and}.

Note that there are similarities between certification-width and energy complexity. Both measures indicate saturation levels of circuits, but while certifica\-tion-width focuses on edges, energy complexity is about the activation of gates. However, energy complexity presents two additional challenges: (i) $EC$ ignores the `fi\-ring' of input gates; (ii) $EC$ counts activated gates even if its signal does not reach the output gate (due to unsatisfied gates -- see Fig.~\ref{fig:anomaly}
). These two issues forbid rushed conclusions about $EC$ based on what we know about certification-width. Nevertheless, the study in~\cite{ALVES20211} also motivates the study of the complexity of computing the best case energy complexity of satisfying assignments in monotone circuits.

\begin{figure}[!ht]
\centering
\includegraphics[width=0.5\textwidth]{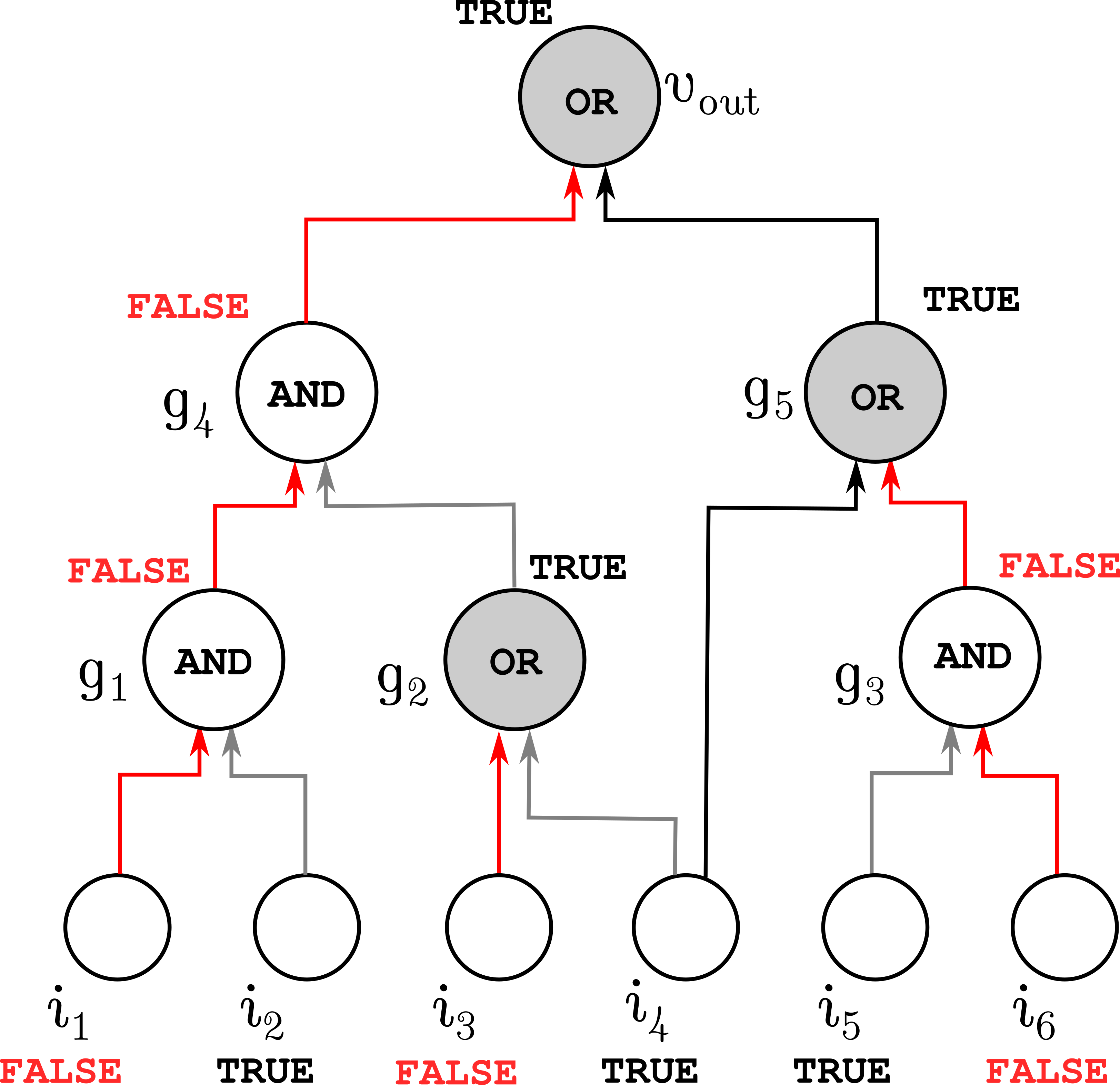}
\caption{Anomalous behavior in certificates for {\sc MinEC}$^+_M$. Note that the edge $(i_4,g_2)$ produces a `leak' of the assignment, i.e. $g_2$ output {\tt true} (increasing the energy complexity), but its signal is not relevant to satisfy $v_{out}$.}
\label{fig:anomaly}
\end{figure}

%

Note that in energy complexity problems in addition to working with the gates needed to satisfy the circuit, it is still necessary to handle gates that assignments may collaterally activate. 
Such behavior makes working with energy complexity problems more challenging than typical satisfying problems where the focus is only on the minimal set of inputs, gates, or wires/edges sufficient to satisfy the circuit.

While computing the worst-case energy complexity of satisfying assignments in monotone circuits is trivial (just activate all inputs), the problem of computing the best-case energy complexity among all satisfying assignments in monotone circuits seems a challenge. Therefore, in this work, we address this particular case where the circuit is monotone, focusing on the following decision problem:


\medskip
\noindent	\fbox{
		\parbox{0.96\textwidth}{
\noindent
{\sc \textsc{Best-Case Energy Complexity of Satisfying Assignments in Monotone Circuits -- MinEC$^+_M$}} 

\noindent
\textbf{Instance}: A monotone Boolean circuit $C$ and a positive integer $k$.

\noindent
\textbf{Question}: Is there a satisfying assignment $X$ for $C$ such that $EC(C,X) \leq k$?
}
}

Besides, we denote by {\sc $k$-MinEC$^+_M$} the parameterized version of {\sc MinEC$^+_M$} where $k$ is taking as the parameter.

\section{Computational complexity analysis}

In this section, we present our (parameterized) complexity results regarding {\sc MinEC$^+_M$}. Since planarity is a notion with significant relevance in the field of circuit analysis, we started our analysis with a focus on planar circuits.

\subsection{NP-hardness on planar circuits}
Using a reduction from \textsc{Planar Vertex Cover}, similar to that employed in \cite{ALVES20211}, we are able to show that {\sc MinEC$^+_M$} is NP-complete even when restrict to planar circuits.
%

\begin{theorem}\label{thm:np}
\textsc{MinEC}$^+_M$ is NP-complete even when restricted to planar circuits.
\end{theorem}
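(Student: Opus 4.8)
The plan is to reduce from \textsc{Planar Vertex Cover}: given a planar graph $G = (V, E)$ and an integer $t$, we build in polynomial time a planar monotone circuit $C$ and an integer $k$ such that $G$ has a vertex cover of size at most $t$ if and only if $C$ has a satisfying assignment $X$ with $EC(C,X) \le k$. First I would introduce, for each vertex $v \in V$, a designated input variable $i_v$, and for each edge $e = uv \in E$ an {\tt OR}-gate $g_e$ fed by $i_u$ and $i_v$; then I would combine all the $g_e$ with a tree (or fan-in-$2$ cascade) of {\tt AND}-gates whose root is $v_{out}$. A satisfying assignment must make every $g_e$ output {\tt true}, which forces the set $S = \{ v : x_v = 1\}$ to be a vertex cover of $G$. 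The energy cost of such an assignment is $|E|$ (all the edge-gates fire) plus the number of firing {\tt AND}-gates in the combining tree plus $1$ for $v_{out}$; crucially, once every $g_e$ fires, the whole {\tt AND}-tree fires regardless of $S$, so this part of the cost is a fixed constant $\kappa$ depending only on $G$. Since input gates are not counted by $EC$, the total energy is exactly $|E| + \kappa$ for \emph{every} satisfying assignment, so minimizing energy this way gives no information about $|S|$ — this is the obstacle the construction must overcome, and it is precisely the ``anomalous leak'' behaviour highlighted in Fig.~\ref{fig:anomaly}.

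To make the energy sensitive to $|S|$, I would insert, for each vertex $v$, a small gadget that fires an extra gate exactly when $i_v$ is set to {\tt true}. Concretely, route $i_v$ through an {\tt OR}-gate $h_v$ (with $i_v$ as one input and, say, a duplicate wire or a second copy of $i_v$ as the other, so that $h_v$ is logically equivalent to $i_v$); the gate $h_v$ fires iff $x_v = 1$, contributing $|S|$ to the energy, while feeding $g_e$ from the $h_u, h_v$ outputs instead of directly from the inputs. Then any satisfying assignment with underlying set $S$ has energy exactly $|S| + |E| + \kappa$, and choosing $k = t + |E| + \kappa$ completes the equivalence. The forward direction (vertex cover $\Rightarrow$ low-energy assignment) is immediate; for the reverse direction I must argue that from any satisfying $X$ the set $S$ of {\tt true} inputs is a vertex cover (because each $g_e$ needs a {\tt true} input) and that $EC(C,X) \ge |S| + |E| + \kappa$, so $|S| \le t$.

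The remaining point, and the one requiring genuine care, is \textbf{planarity of $C$}. Starting from a planar embedding of $G$, the $h_v$ gadgets are local and preserve planarity, and each edge-gate $g_e$ can be placed near the drawing of $e$; but the {\tt AND}-tree that collects all $|E|$ outputs into $v_{out}$ is a high-degree aggregation structure that does not obviously embed without crossings. I would handle this exactly as in the construction of \cite{ALVES20211}: route the collecting sub-circuit along a planar ``cycle'' or spanning structure of the (planar) graph $G$ — for instance, by ordering the edge-gates along a closed walk in a planar embedding and chaining the {\tt AND}-gates in that cyclic order, so that all added wires stay inside faces of the embedding. I expect verifying that this chaining can always be carried out without introducing crossings — i.e.\ that the auxiliary {\tt AND}-cascade fits into the planar embedding of $G$ together with the per-vertex and per-edge gadgets — to be the main technical obstacle; once planarity is established, NP-membership is trivial (guess $X$, evaluate $C$, count firing gates), and NP-hardness follows from the correctness of the reduction together with NP-hardness of \textsc{Planar Vertex Cover}.
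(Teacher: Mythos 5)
Your overall architecture coincides with the paper's: a reduction from \textsc{Planar Vertex Cover} with one {\tt OR}-gate per edge fed by the two endpoint gadgets, one single-fan-in gate per vertex so that exactly $|S|$ vertex-gates fire (the paper uses an {\tt AND}-gate $v_i$ fed by a dedicated input $v_i^{in}$, which plays the role of your $h_v$), an aggregation structure forcing every edge-gate to fire in any satisfying assignment, and a threshold of the form $k = t + |E| + \kappa$. The NP-membership argument and the forward/backward correctness argument are fine as sketched.

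The genuine gap is exactly the point you flag and then leave unresolved: the planar embedding of the structure that collects all $|E|$ edge-gates into $v_{out}$. Your proposed fix --- ordering the edge-gates along a closed walk of the embedding and chaining {\tt AND}-gates in that cyclic order --- does not obviously work: the edge-gates sit on edges scattered over all faces of the embedding, and a single cascade visiting all of them would in general have to cross face boundaries in ways that cannot be drawn without crossings; you give no argument that it can. The paper resolves this with a two-part device that your proposal lacks. First, a preprocessing step subdivides every edge of the original instance twice (turning each edge $ab$ into a path $a\,b'\,a'\,b$, which shifts the cover size by $|E(H)|$ and keeps planarity); this guarantees that any two adjacent faces share a boundary edge whose endpoints $a',b'$ are \emph{new} vertices. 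Second, it takes a BFS spanning tree of the dual graph rooted at the external face, places one collector {\tt AND}-gate $v_{f_j}$ inside each internal face $f_j$, feeds into it the not-yet-collected edge-gates on the boundary of $f_j$, and routes its output into the gates $a'$ and $b'$ on the boundary shared with the parent face; the edge-gates on the external face feed $v_{out}$ directly. Planarity is preserved because each collector and all its wires stay inside one face, acyclicity follows from the tree structure, and the forcing chain (each middle edge-gate $v^{cover}_{a'b'}$ can only fire if $a'$ or $b'$ fires, which requires the child collector to fire, which requires all edge-gates of the child face to fire) propagates the ``all edge-gates must fire'' constraint down the dual tree. Without this (or an equivalent concrete routing argument), your reduction is not established to produce a planar circuit, and that is where essentially all of the technical content of the theorem lies.
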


\begin{proof}
Given a circuit $C$ and an integer $k$, forming an instance of \textsc{MinEC}$^+_M$, an assignment of Boolean values can be seen as a certificate for this instance. Since it is easy to count the number of gates outputting \texttt{true} according to an assignment, clearly, \textsc{MinEC}$^+_M$ is in NP.


Now, we show the NP-hardness of \textsc{MinEC}$^+_M$ by a reduction from \textsc{Planar Vertex Cover (PVC)}, a well-known NP-complete problem.

\medskip
\noindent	\fbox{
		\parbox{0.96\textwidth}{
\noindent
{\sc \textsc{Planar Vertex Cover (PVC)}}

\noindent
\textbf{Instance}: A planar graph $G$; a positive integer $c$.

\noindent
\textbf{Question}: Is there a set $S$ of size at most $c$, such that for each edge $(u,v) \in E(G)$ either $u \in S$ or $v\in S$?
}
}
\medskip

First, consider the following preprocessing: let $(H,c')$ be an instance of {\sc Planar Vertex Cover}, by subdividing twice each edge of $H$, we obtain a graph $G$ where each edge $e=(ab)$ of $H$ is replaced by a $P_4$ $ab'a'b$, where $a'$ and $b'$ are new vertices (see Fig.~\ref{fig:instancePVC}). 

\begin{figure}[!ht]
\begin{subfigure}[b]{.48\linewidth}\centering \includegraphics[width=0.8\linewidth]{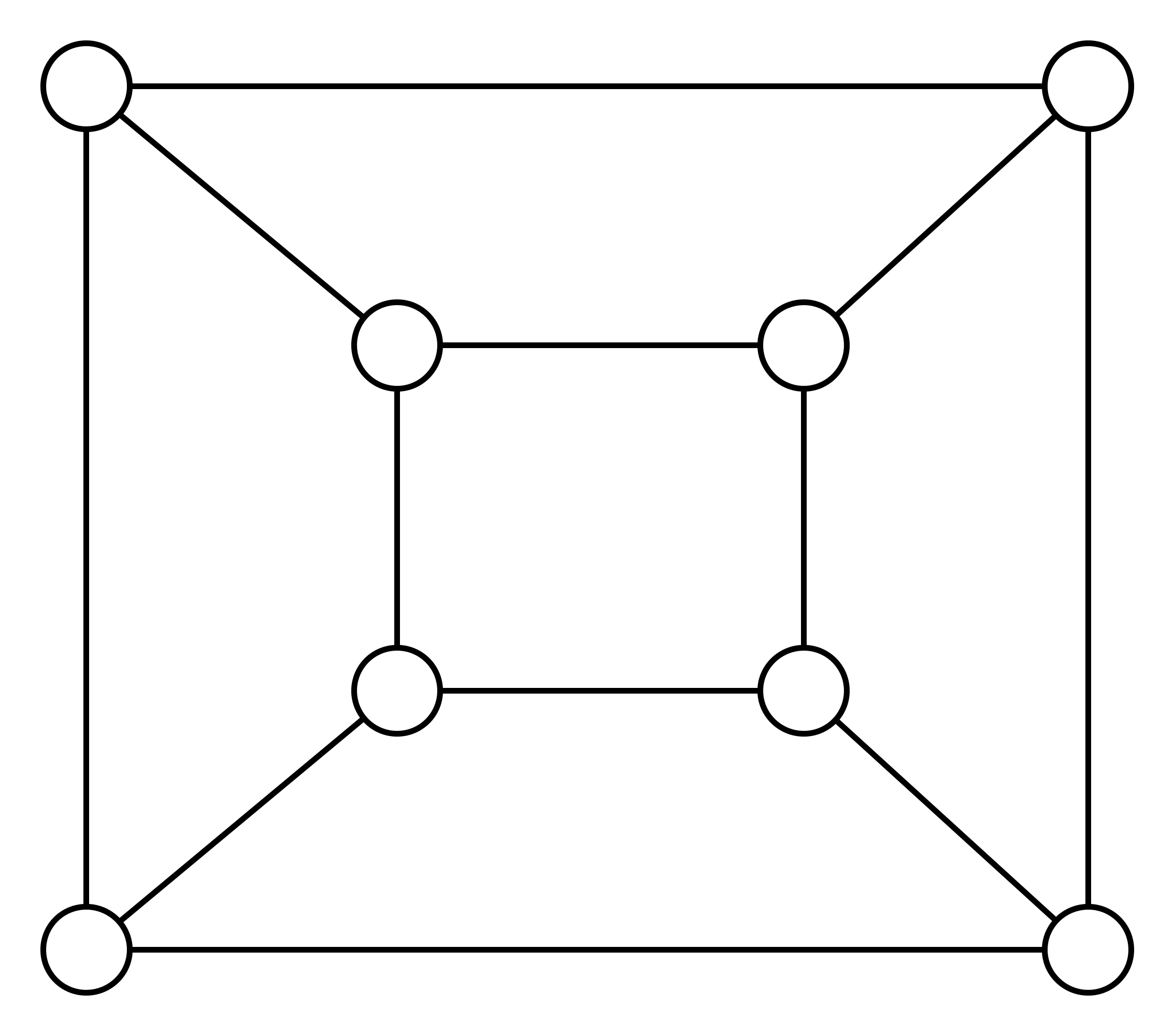} \caption{Graph $H$.}\label{original-H}\end{subfigure}
\hfill
\begin{subfigure}[b]{.48\linewidth}\centering \includegraphics[width=0.8\linewidth]{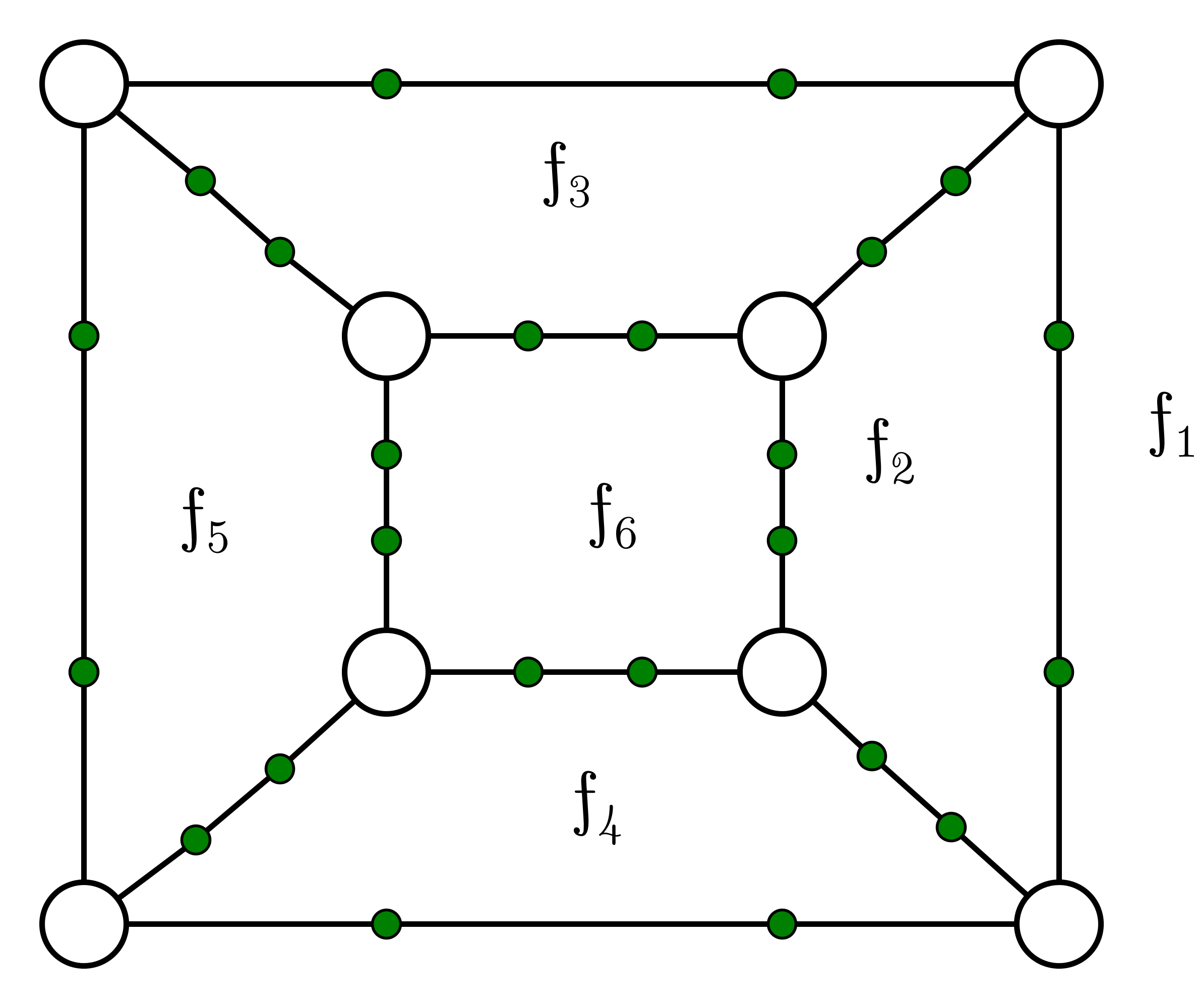} \caption{Graph $G$.}\label{graph-G}\end{subfigure}
\caption{Green vertices are those $a'$ and $b'$ inserted in $G$ after the preprocessing step.}
\label{fig:instancePVC}
\end{figure}

Notice that $G$ is also planar; $H$ has a vertex cover of size $c'$ if and only if $G$ has a vertex cover of size $c=c'+|E(H)|$; and given a planar embedding of $G$, the boundary of any pair of adjacent faces of $G$ contains at least three edges.

From a fixed planar embedding of the instance $(G,c)$ of \textsc{Planar Vertex Cover}, we proceed with the reduction.  We will construct an instance $(C, k)$ of \textsc{MinEC}$^+_M$ where $C$ is a planar monotone circuit, and $k$ is the target size of the energy complexity.

From the structure of $G$, we apply the following:

\begin{enumerate}
\item first, set $V(C)=V(G)$;
\item for each vertex $v_i \in V(G)$, create an input vertex $v_{i}^{in}$, assign $f(v_{i}) = \texttt{AND}$, and add a directed edge $(v_{i}^{in},v_{i})$; 

\item for each edge $e_i=(u,v) \in E(G)$, create a vertex $v_{e_i}^{cover}$ such that $f(v_{e_i}^{cover}) = \texttt{OR}$, and create the directed edges $(v, v_{e_i}^{cover})$ and $(u, v_{e_i}^{cover})$. (see Fig.~\ref{step3})

\begin{figure}[!ht]\label{fig:step3ETDG}
\begin{subfigure}[b]{0.48\linewidth}\centering \includegraphics[width=0.8\linewidth]{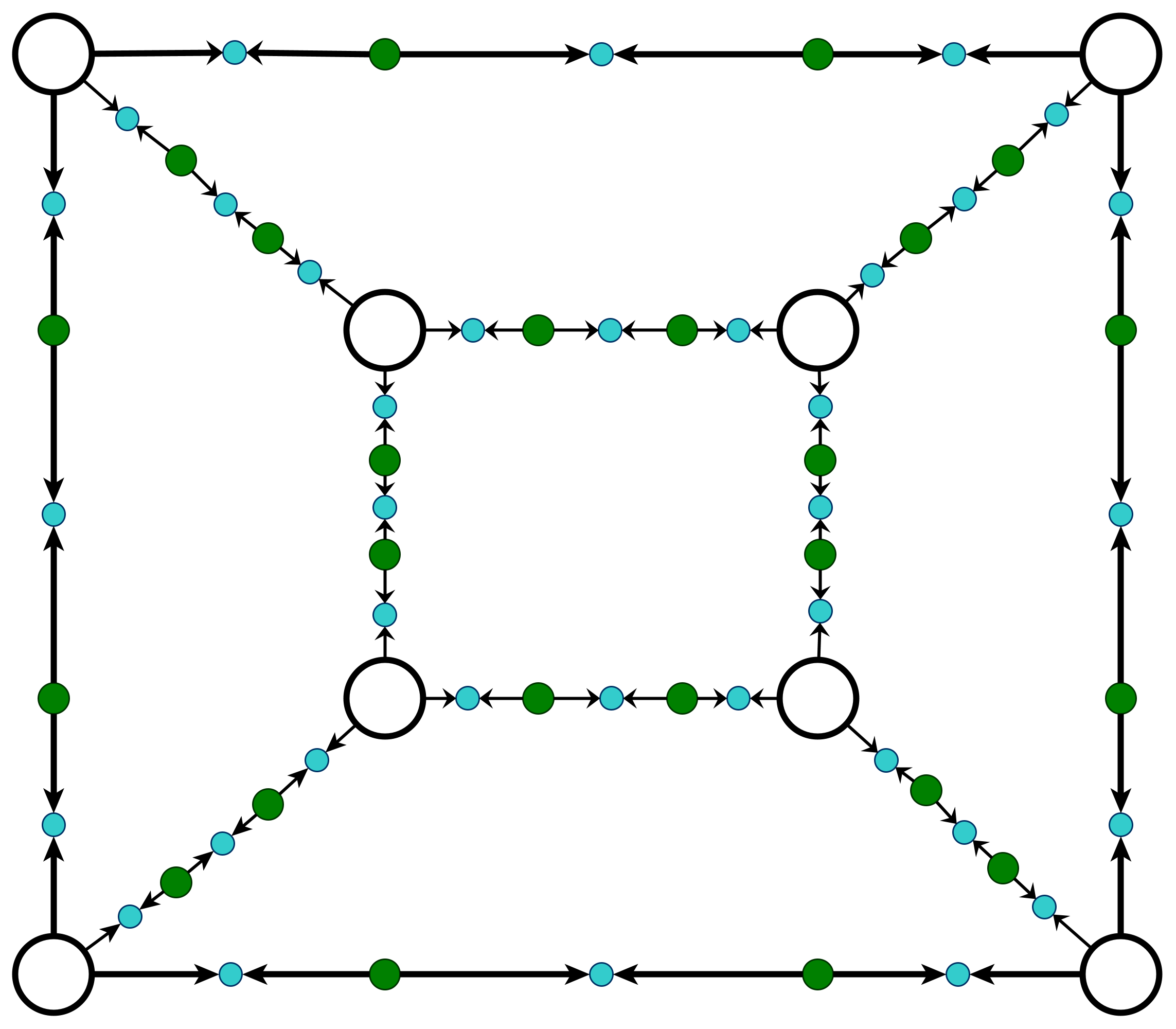} \caption{Blue vertices refer to $v_{e_i}^{cover}$ created in step $3$. For simplicity, vertices created in step 2 was omitted}\label{step3}\end{subfigure}
\hfill
\begin{subfigure}[b]{.48\linewidth}\centering \includegraphics[width=0.6\linewidth]{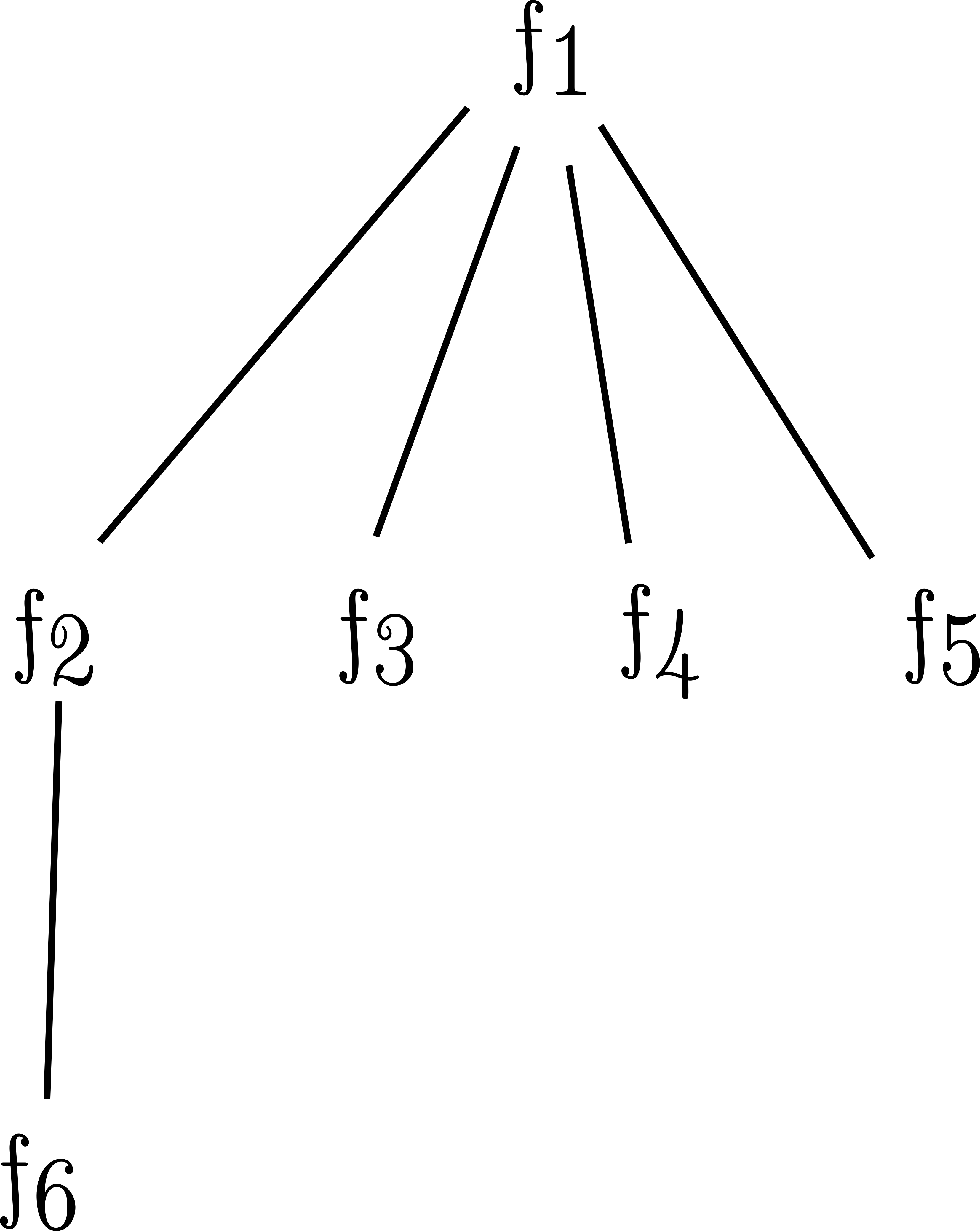} \caption{Spanning tree of the dual graph of $G$}
 \label{schemeT}\end{subfigure}
\caption{(a) Circuit after Step 3; (b) illustration of $\mathcal T_{D_G}$.}
\end{figure}

\end{enumerate}

Notice that $C$ is still planar. Now, preserving the planarity, we will ensure that every $v_{e_i}^{cover}$ outputs \texttt{true} for any assignment of $C$ as follows (see Fig.~\ref{fig:reduction}, Fig.~\ref{fig:reduction2} and Fig.~\ref{fig:reduction3}): 

\begin{enumerate}\setcounter{enumi}{3}
\item create an output vertex $v_{out}$ such that $f(v_{out}) = \texttt{AND}$;

\item for each vertex $v_{e_i}^{cover}$ which are in the external face of $G$, create one directed edge from $v_{e_i}^{cover}$ to $v_{out}$;
\end{enumerate}

Let $D_G$ be the dual graph of $G$, where $f_1$ represents the external face of $G$. Let $\mathcal T_{D_G}$ be the spanning tree of $D_G$ obtained from a breadth-first search of $D_G$ rooted at $f_1$. (see Fig.~\ref{schemeT})
In a top-down manner, according to a level-order traversal of $\mathcal T_{D_G}$, we visit each edge $e=(f_i,f_j)$ of $\mathcal T_{D_G}$ applying the following:

\begin{enumerate}\setcounter{enumi}{5}
\item let $f_j$ be a child of $f_i$ in $\mathcal T_{D_G}$; by construction of $G$, it follows that the boundary between $f_i$ and $f_j$ contains at least three edges, being at least one of which between vertices $a'$ and $b'$ that do not exist in $H$; thus, create a vertex $v_{f_j}$, add edges from $v_{f_j}$ to such $a'$ and $b'$; and for each $v_{e_\ell}^{cover}$ in the face $f_j$ that, yet, doesn't reach $v_ {out}$, add an edge from $v_{e_\ell}^{cover}$ to $v_{f_j}$; after that, if $v_{f_j}$ has in-degree greater than $0$ then set $f(v_{f_j}) = {\tt AND}$, otherwise create an input vertex $v_{f_j}^{in}$ add an edge from $v_{f_j}^{in}$ to $v_{f_j}$ and set $f(v_{f_j}) = {\tt AND}$;
\end{enumerate} 

\begin{figure}[!ht]
\centering
 \includegraphics[width=0.7\linewidth]{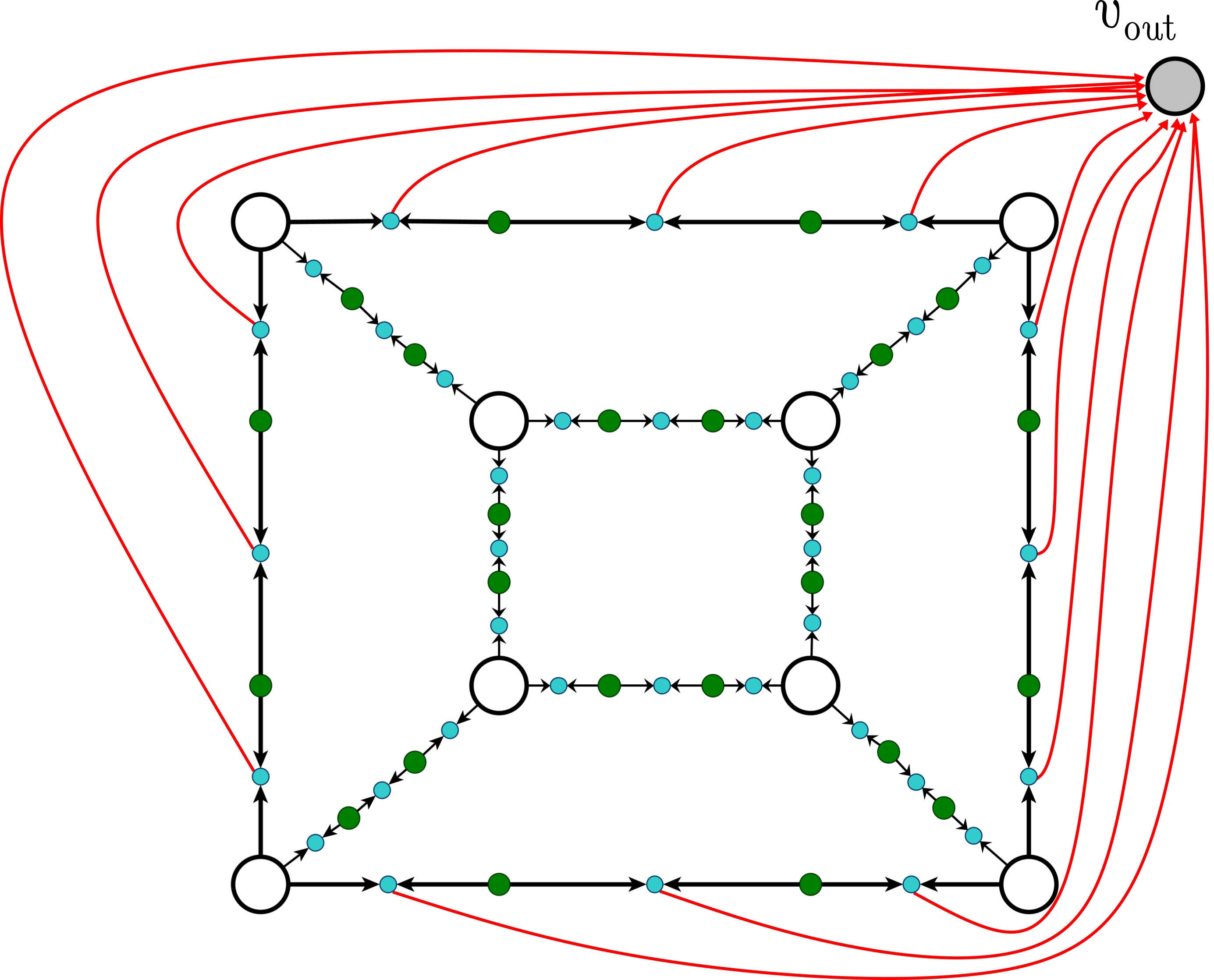}
\caption{Steps 4 and 5. Edges outgoing from $v_{e_i}^{cover}$ are highlighted in red. 
}\label{fig:reduction}
\end{figure}

\begin{figure}[!ht]
\centering
 \includegraphics[width=0.7\linewidth]{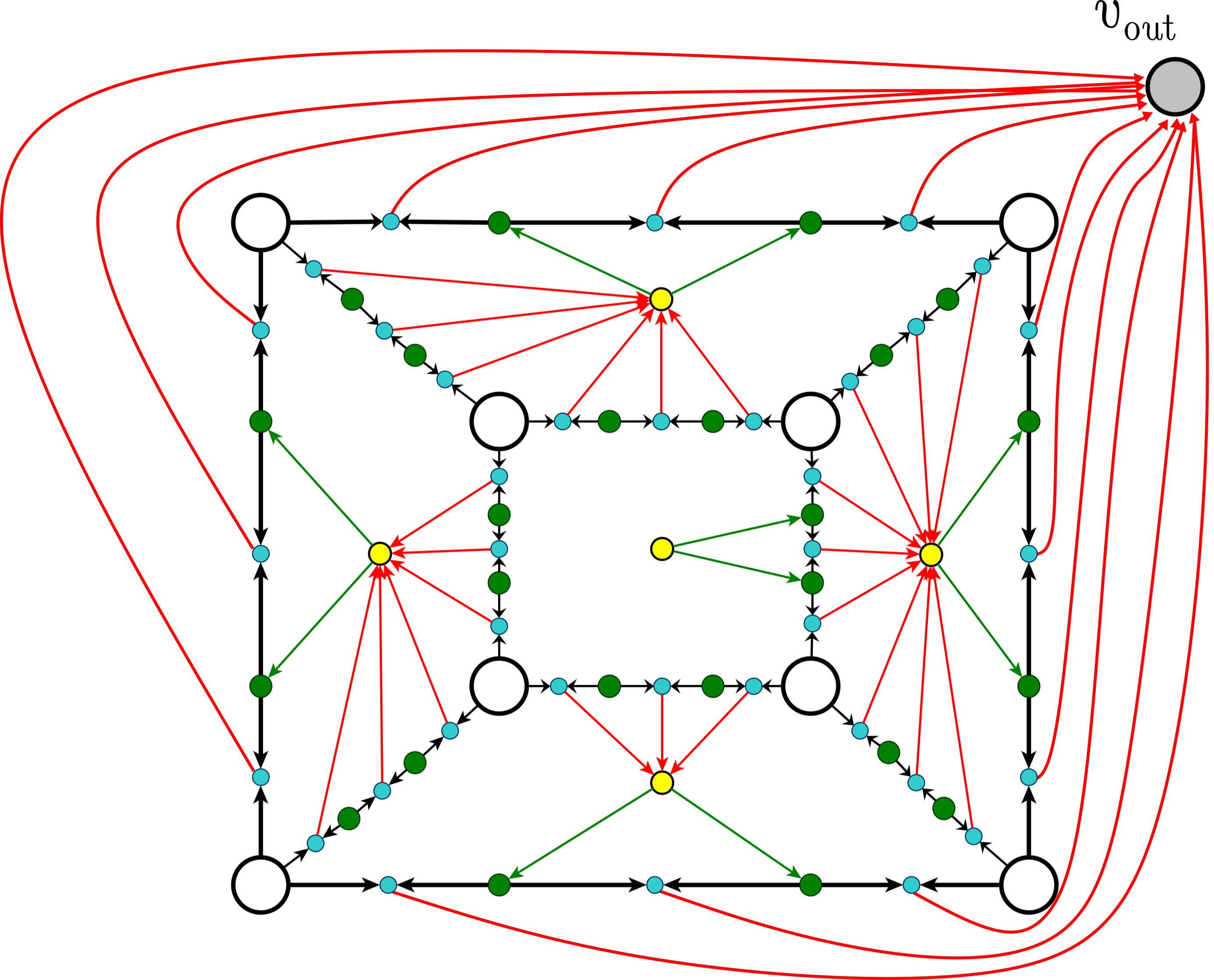}
\caption{Step 6. Each yellow vertex refers to $v_{f_j}$ inserted on step $6$; Edges outgoing from $v_{f_j}$ are highlighted in green (Inputs omitted).
}\label{fig:reduction2}
\end{figure}

From Step 6 holds that if $v_{out}$ outputs \texttt{true} then every vertex $v_{e_i}^{cover}$ and $v_{f_j}$ also output \texttt{true}. Besides, as each $v_{f_j}$ can be added in the planar embedding inside its respective face, then the resulting graph still planar, and by using $\mathcal T_{D_G}$ it holds that the added edges preserve the graph acyclic.

\begin{enumerate}\setcounter{enumi}{6}
\item finally, set $k = c + |E(G)| + |V(\mathcal {D_G})|$.
\end{enumerate} 

If $G$ has a vertex cover $S$ such that $|S| = c$ then we obtain a satisfying assignment $A$ to $C$ having energy complexity at most $k$ as follows: 
for each $v_i\in V(G)$ such that $v\notin S$ we set
\texttt{false} to its corresponding input $v_i^{in}$ (created in step 2 of the reduction); and we assign \texttt{true} to the other inputs.
Thus, in exactly $c$ edges flows \texttt{true} from vertices $v_ {i}^{in}$ to its out-neighbor $v_i$ (so, at most $c$ vertices $v_i$ will output \texttt{true});
and from each $v_i$ set as \texttt{true} flows positive values to each  $v_e^{cover}$ such that $e$ is an out-edge of $v_i$. Therefore, since $S$ is a vertex cover, in a bottom-up manner according to $\mathcal T_{D_G}$, we can observe that each face vertex $v_{f_j}$, each vertex $v_{e_i}^{cover}$, as well as $v_{out}$ will be set as \texttt{true}. Thus, $A$ is a satisfying assignment to $C$. At this point, it remains to analyze the energy of $A$. Note that $C$ has exactly $|E(G)|$ vertices $v_{e_i}^{cover}$, $|V({D_G})| - 1$ face vertices $v_{f_j}$, and all of them is set as \texttt{true} by $A$, which with the addition of $v_{out}$ implies energy consumption $|E(G)|+|V({D_G})|$, since at most $c$ vertices $v_i$ will output \texttt{true}, it follows that $A$ has energy $|E(G)|+|V({D_G})|+c$.

Conversely, let $C$ be a circuit with $k$ gates outputting \texttt{true}. By construction, if $C$ is satisfiable, then
all vertices $v_{e_i}^{cover}$ and $v_{f_j}$ also are satisfiable and, consequently, there is at most $c$ vertices $v_i$ outputting \texttt{true}, representing, in this way, a vertex cover with $c$ vertices in $G$.
\end{proof}


\begin{figure}[pht]
\begin{subfigure}[b]{.5\linewidth}\centering \includegraphics[width=0.8\linewidth]{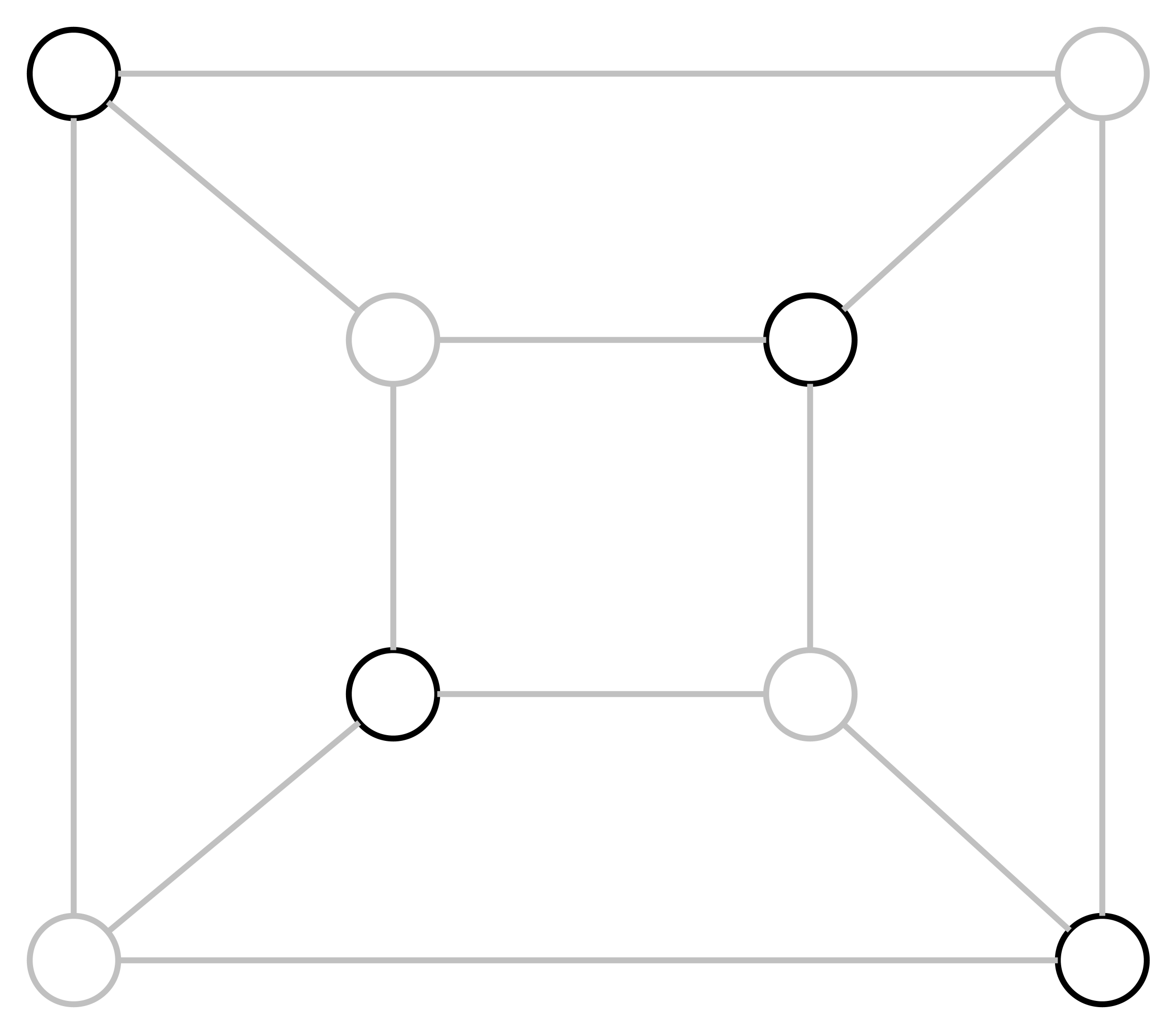} \caption{Vertex cover of $H$.}\label{fig:1vc}\end{subfigure}
\hfill
\begin{subfigure}[b]{.5\linewidth} \centering \includegraphics[width=0.8\linewidth]{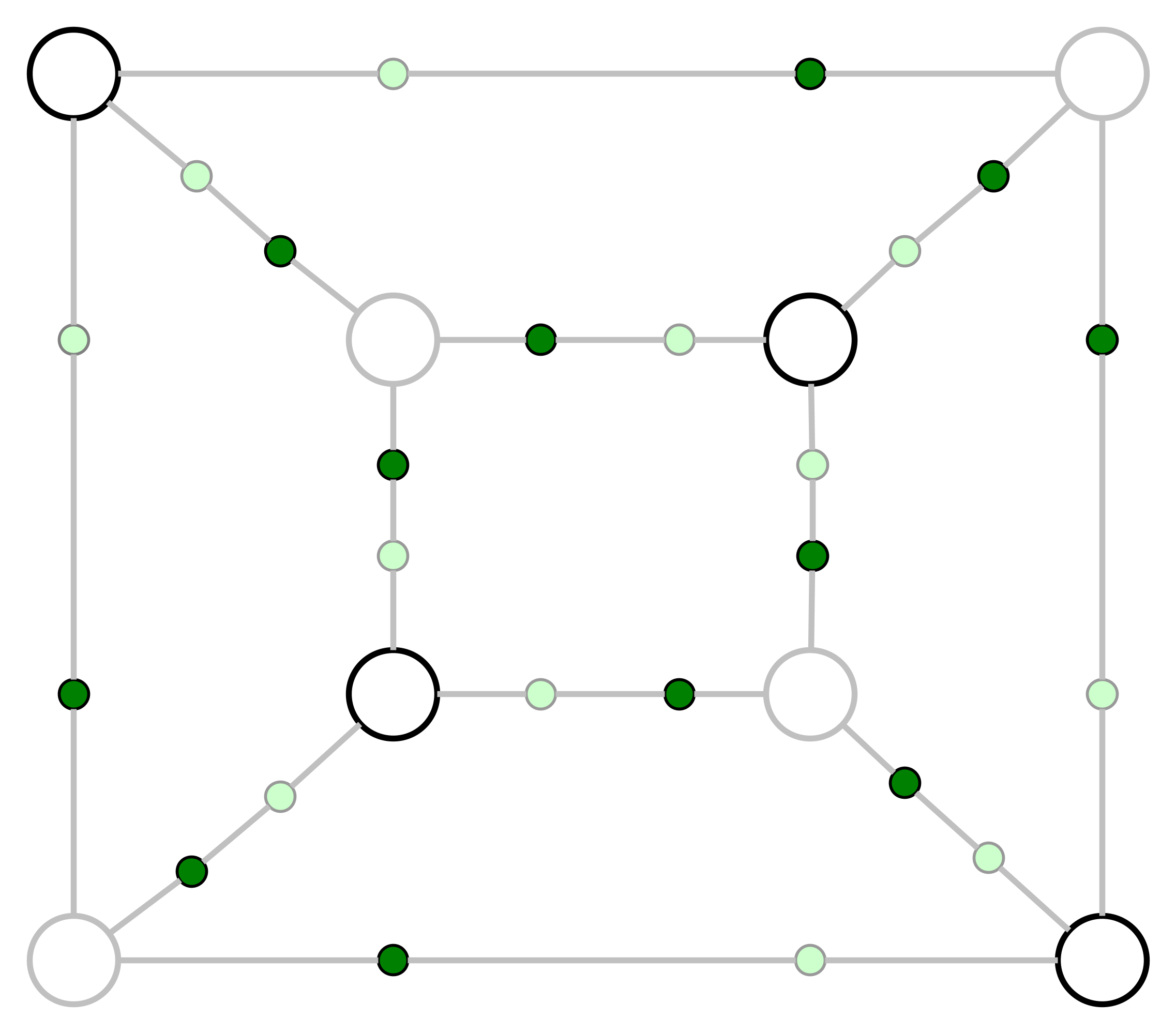} \caption{Corresponding vertex cover of $G$.}\label{fig:2vc}\end{subfigure}
\vfill
\begin{subfigure}[b]{\linewidth}\centering \includegraphics[width=0.65\linewidth]{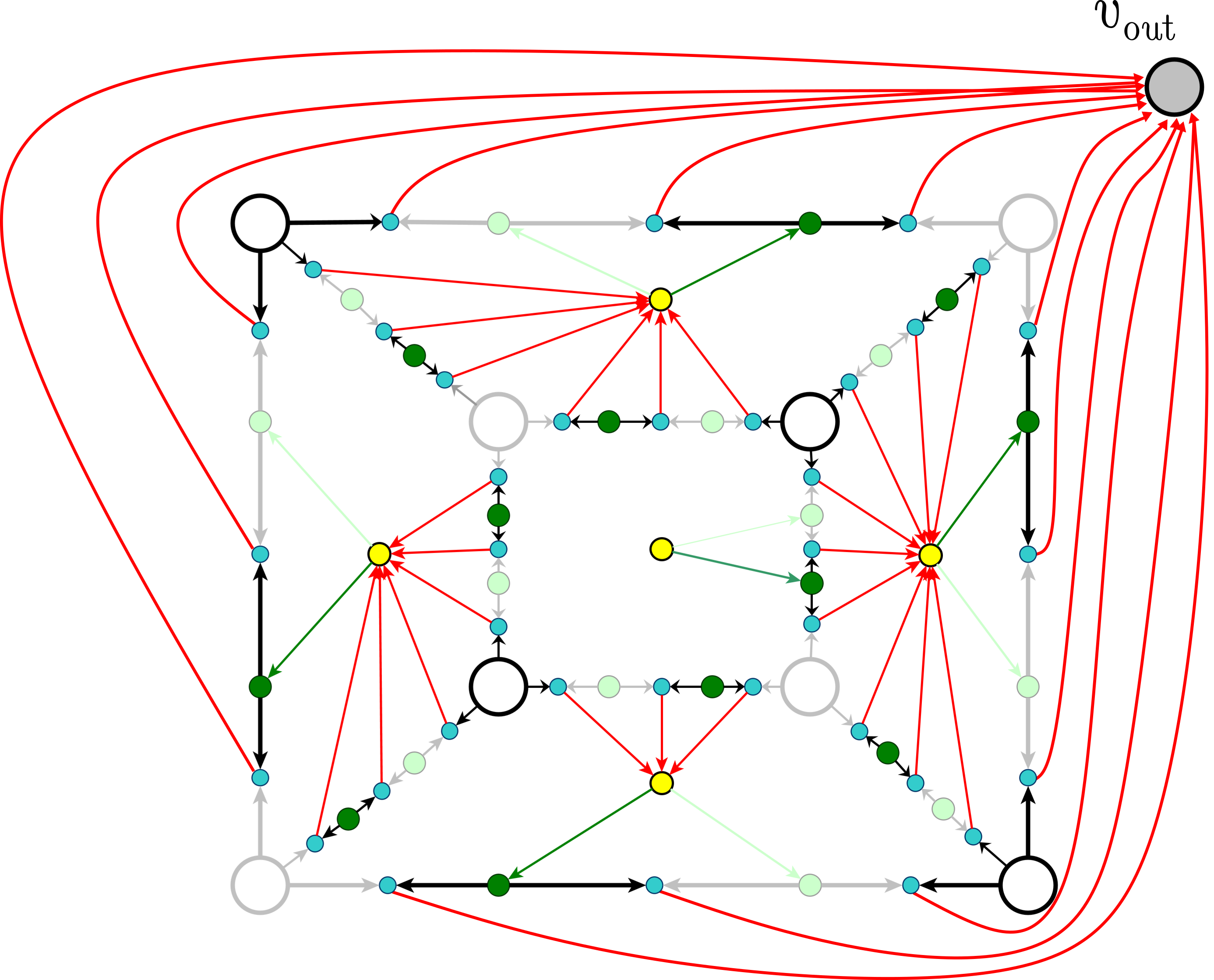} \caption{Accepting subtree of $G$ with only vertices enabled.}\label{fig:3vc}\end{subfigure}
\hfill
\begin{subfigure}[b]{\linewidth}\centering \includegraphics[width=0.9\linewidth]{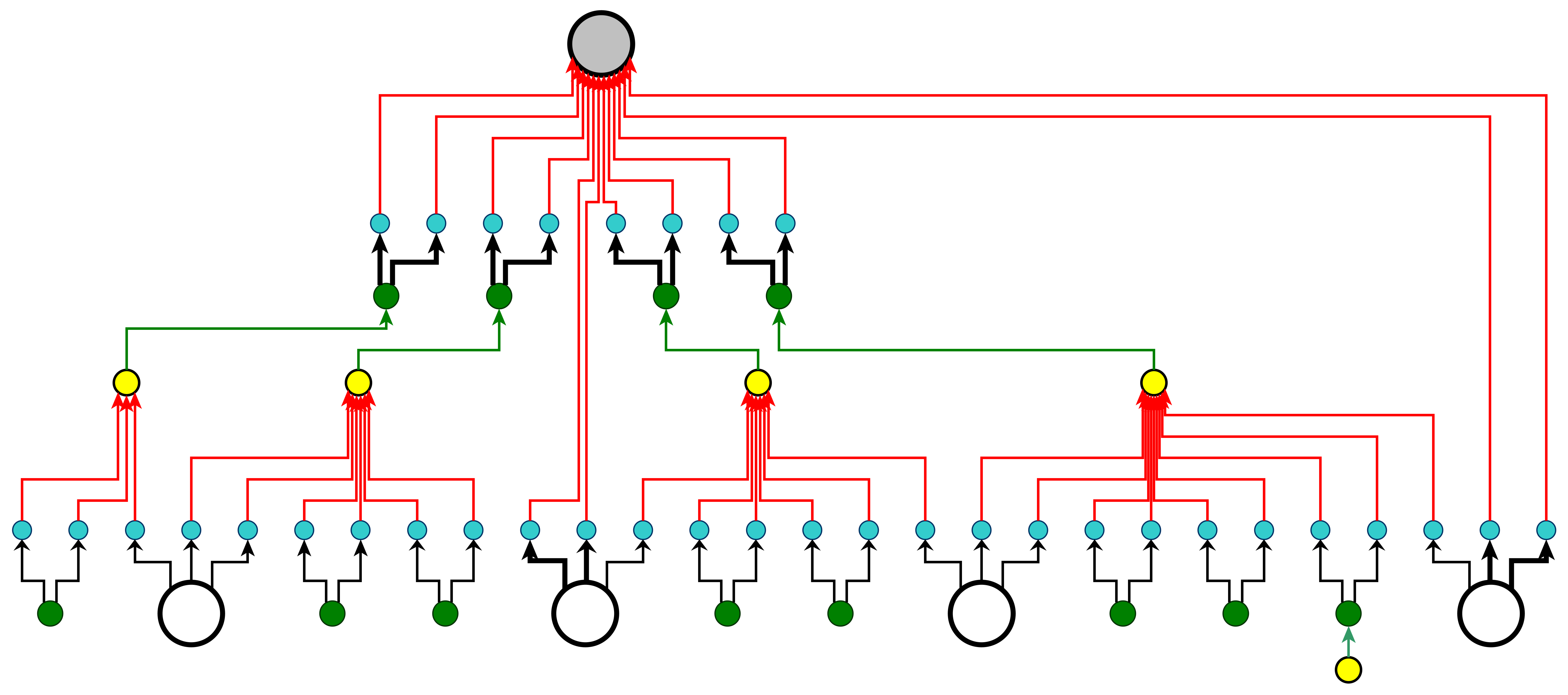} \caption{Bottom-up representation of accepting subtree.}\label{fig:4vc}\end{subfigure}
\vfill
\caption{Consequences of a minimum vertex cover of $H$ in the circuit $C$. Recall that each blue vertex is a {\tt OR}-vertex, $v_e^{cover}$, representing an edge $e$ of $G$, while the other vertices are {\tt AND}-vertices.}\label{fig:reduction3}
\end{figure}


\subsection{Parameterized Complexity}

Next, we investigate the parameterized complexity of \textsc{MinEC}$^+_M$.

\begin{theorem}\label{thm:XP}
$k$-\textsc{MinEC}$^+_M$ is in XP.
\end{theorem}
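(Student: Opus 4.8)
The plan is to exploit the trivial but crucial fact that a satisfying assignment of energy complexity at most $k$ turns at most $k$ gates to \texttt{true}, so we can afford to \emph{guess} the entire set of gates that fire. Write $n=|V(C)|$ and let $G^{+}=G\cup\{v_{out}\}$ be the set of non-input vertices. For every $T\subseteq G^{+}$ with $v_{out}\in T$ and $|T|\le k$ --- there are only $\sum_{j\le k}\binom{n}{j}=n^{O(k)}$ of them --- the algorithm tests whether some input assignment makes the set of gates that output \texttt{true} equal to \emph{exactly} $T$, and answers \textsc{yes} iff some $T$ passes. This is correct: the fire-set of an optimal assignment of a \textsc{yes}-instance is such a $T$, and conversely any $T$ that passes exhibits a satisfying assignment of energy $|T|\le k$.

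Fix $T$. The test begins with forced-value propagation. An \texttt{AND}-gate in $T$ must have all in-neighbours \texttt{true}: so all of its gate in-neighbours must lie in $T$ (otherwise discard $T$) and all of its input in-neighbours are forced \texttt{true}; dually an \texttt{OR}-gate outside $T$ forces its gate in-neighbours outside $T$ (otherwise discard $T$) and its input in-neighbours \texttt{false}. Let $P,N\subseteq I$ be the inputs forced \texttt{true}, resp.\ \texttt{false}; discard $T$ if $P\cap N\ne\emptyset$. After this, each gate's value is pinned down except for two families carrying an unfulfilled obligation: an \texttt{OR}-gate $g\in T$ with no gate in-neighbour in $T$ and no input in-neighbour in $P$ still needs one of its remaining (free) input in-neighbours to be \texttt{true}; an \texttt{AND}-gate $g\notin T$ with no gate in-neighbour outside $T$ and no input in-neighbour in $N$ still needs one of its free input in-neighbours to be \texttt{false}. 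Discard $T$ if any such gate has no free input in-neighbour; crucially, the first family has at most $|T|\le k$ members.

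It remains to set the free inputs so as to hit every first-family gate with a \texttt{true} input while leaving every second-family gate some \texttt{false} input. We \emph{guess} a set $S$ of at most $k$ free inputs ($n^{O(k)}$ choices), assign \texttt{true} to $S\cup P$ and \texttt{false} to every other input, and check in polynomial time that $S$ meets the free in-neighbour set of each first-family gate and is not a superset of any second-family gate's free in-neighbour set. Since one \texttt{true} free input per first-family gate suffices and there are at most $k$ of them, such an $S$ of size $\le k$ exists whenever the obligations are jointly satisfiable, and the case analysis of the previous paragraph shows this valuation makes precisely the gates of $T$ fire. Altogether the running time is $n^{O(k)}\cdot n^{O(k)}\cdot\mathrm{poly}(n)=n^{O(k)}$, so $k$-\textsc{MinEC}$^{+}_M$ is in XP.

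The only non-routine ingredient is the last step: if we drop the bound on $|T|$, the residual problem is exactly a \emph{monotone} \textsc{Sat} instance (each constraint is a disjunction of only positive, or only negated, input literals), which is NP-hard in general; the argument really uses that at most $O(k)$ of these constraints are positive, so a witnessing set of \texttt{true} inputs of size $\le k$ exists and can be found by brute force. The accompanying bookkeeping point is to ensure that turning the inputs of $S$ on cannot switch on a gate outside $T$; this is automatic because a free input never feeds an \texttt{AND}-gate of $T$ nor an \texttt{OR}-gate outside $T$ (it would have been placed in $P$ or $N$ otherwise), so it influences only the two families isolated above.
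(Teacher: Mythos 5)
Your proposal is correct and follows essentially the same strategy as the paper's proof: guess the fire-set $T$ of at most $k$ gates in $n^{O(k)}$ time, propagate the forced consistency conditions and forced input values, and then guess at most $k$ additional \texttt{true} inputs (bounded because at most $|T|\le k$ \texttt{OR}-gates carry an unfulfilled obligation), finally checking that every \texttt{AND}-gate outside $T$ retains a \texttt{false} in-neighbour. The paper phrases the last step via a minimum-weight satisfying assignment where you phrase it as shrinking any witness to one \texttt{true} input per obligated \texttt{OR}-gate, but these are the same observation.
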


\begin{proof}

Let  $C = (V,E)$ be a circuit with $V=I \cup G\cup \{v_{out}\}$, where $I$ is the set of inputs of $C$, $G$ is the set of gates and $v_{out}$ is the output gate.
If $C$ has a satisfying assignment $X$ such that $EC(C,X)\leq k$ then we can find $X$ as follows:

\begin{enumerate}
    \item suppose that $X$ is the satisfying assignment with $EC(C,X)\leq k$ having minimum weight (i.e., minimum number of inputs assigned as {\tt true}); 

    \item first, we ``guess'' the set $T$ of gates that should be activated by $X$, that is, in $n^{O(k)}$ time, we enumerate each subset $T$ of gates such that $|T|\leq k$ and check each one in a new branch; 
    
    \item for each $T$ we can check in polynomial time whether it is consistent, that is: 
    \begin{itemize}
        \item $v_{out}\in T$;
        
        \item for each {\tt OR}-gate $v$ in $T$ either it has an in-neighbor in $T$ or it has an in-neighbor in $I$, and for each {\tt AND}-gate $v$ in $T$ its in-neighborhood is contained in $T\cup I$;
        
        \item conversely, each {\tt OR}-gate $w\notin T$ has no in-neighbor in $T$, and each {\tt AND}-gate $w\notin T$ has at least one in-neighbor that is not in $T$;
        
        \item also, no input is mutually in-neighbor of an {\tt AND}-gate in $T$ and an {\tt OR}-gate not in $T$;
        
        \item if $T$ is the set of gates activated by $X$ then it holds that: any input $i$ that is in-neighbor of an {\tt AND}-gate in $T$ should be set as {\tt true} in $X$; any input $i$ that is in-neighbor of an {\tt OR}-gate not in $T$ should be set as {\tt false} in $X$. Let $X'$ be such a partial assignment;
        
        \item at this point, for a consistent $T$, each {\tt OR}-gate $v$ in $T$ having no in-neighbor in $T$ has at least one in-neighbor in $I$ that is not set as {\tt false} by $X'$, and each {\tt AND}-gate $w\notin T$ having no in-neighbor in $G\setminus T$ has at least one in-neighbor in $I$ that is not set as {\tt true} in $X'$. 
    \end{itemize} 
    \item Since $X$ has minimum weight, from a given consistent set $T$, in order to extend $X'$ into a satisfying assignment $X$ with $EC(C,X)\leq k$ (if any), it is enough to ``guess'' the minimal set of inputs that should be set as {\tt true} to activate the {\tt OR}-gates in $T$ having no in-neighbor in $T$. As $|T|\leq k$, such subset of inputs is also bounded by $k$, thus, in $n^{O(k)}$ time, we can enumerate (if any) each assignment $X''$ extending $X'$ by setting at most $k$ additional inputs as {\tt true} in such a way that each {\tt OR}-gates in $T$ has at least one in-neighbor activated. At this point, from the guessed set $T$ we obtain the assignment $X$ if there is some $X''$ for which each {\tt AND}-gate $w\notin T$ having no in-neighbor in $G\setminus T$ has at least one in-neighbor in $I$ that is set as {\tt false}.   
\end{enumerate}

Note that for any satisfying assignment $X$ of $C$ the set of activated gates must satisfy the properties described in step 3. Since steps 2 and 4 check in $n^{O(k)}$ time all possibilities, it holds that \textsc{MinEC}$^+_M$ is XP-time solvable.
\end{proof}


Now, we show the W[1]-hardness of {\sc $k$-MinEC$^+_M$} using a reduction from \textsc{Multicolored Clique}.

\medskip
\noindent	\fbox{
		\parbox{0.96\textwidth}{
\noindent
{\sc \textsc{Multicolored Clique}}

\noindent
\textbf{Instance}: A graph $Q$ with a vertex-coloring $\ell : V(G) \rightarrow \{1,2, \dots , c \}$.

\noindent
\textbf{Parameter}: A positive integer $c$.

\noindent
\textbf{Question}: Does $Q$ have a $c$-clique containing all $c$ colors?     
}
}
\medskip


\begin{theorem}\label{w1hard}
{\sc $k$-MinEC$^+_M$} is W[1]-hard. 
\end{theorem}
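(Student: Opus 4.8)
The plan is to reduce from \textsc{Multicolored Clique}. Given an instance $(Q, \ell)$ with color classes $V_1, \dots, V_c$, I would build a monotone circuit $C$ and a budget $k = f(c)$ (a function of $c$ only, together with a count of mandatory ``always-true'' gates) so that $C$ has a satisfying assignment of energy at most $k$ if and only if $Q$ has a multicolored $c$-clique. The key design principle is a \emph{selection gadget} for each color class: a collection of \texttt{OR}-gates (or a small subcircuit feeding a single \texttt{OR}-gate) whose activation pattern encodes ``which vertex of $V_i$ is chosen,'' engineered so that any satisfying assignment must activate at least one vertex-representative per color, and activating more than one strictly increases the energy. Then, for each pair of colors $\{i,j\}$, an \texttt{AND}-based \emph{edge-check gadget} must be satisfied; the output $v_{out}$ is an \texttt{AND} over all $\binom{c}{2}$ edge-check outputs (and over all $c$ selection outputs). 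The edge-check for $\{i,j\}$ is wired so that it can be satisfied with bounded extra energy precisely when the vertex chosen in $V_i$ and the vertex chosen in $V_j$ are adjacent in $Q$ — typically by routing the two chosen-vertex signals into a gate indexed by an actual edge of $Q[V_i, V_j]$, so consistency forces the pair to be an edge.

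The order of steps I would carry out: (1) fix the gadget for selecting one vertex per color and argue the ``at least one, and more than one costs energy'' property, which pins down how many gates in the selection layer fire in an optimal assignment; (2) fix the edge-verification gadget and compute exactly how many gates it forces to fire as \texttt{true} in the best case, distinguishing the ``chosen pair is an edge'' case from the ``not an edge'' case; (3) add the output \texttt{AND}-tree and any auxiliary gates needed to force all verification gadgets to be satisfied (so that, as in the NP-hardness reduction, satisfaction of $v_{out}$ propagates down and forces every gadget output to be \texttt{true}); (4) set $k$ to the exact count of gates that must fire when a genuine multicolored clique is selected — this is a function of $c$ plus the number of ``housekeeping'' gates that fire in every satisfying assignment; (5) prove both directions: a clique yields an assignment meeting the budget, and conversely any assignment of energy $\le k$ cannot afford to activate a second vertex in any color class nor to leave any edge-check unsatisfied, hence must encode a clique; (6) observe the reduction is polynomial and $k$ depends only on $c$, giving a parameterized reduction and thus W[1]-hardness.

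The main obstacle, and the reason this is more delicate than the NP-hardness reduction, is the ``collateral activation'' phenomenon emphasized earlier in the paper (Fig.~\ref{fig:anomaly}): in a monotone circuit, feeding a signal into an \texttt{OR}-gate forces that gate \texttt{true} even when its output is irrelevant, so a naive multi-way selection gadget leaks energy and the budget accounting breaks. I would handle this by making the selection gadget's ``losing'' branches feed only into \texttt{AND}-gates that stay \texttt{false} (so no spurious \texttt{OR} fires), or by charging the leaks into the budget uniformly so that the comparison between the honest case and any cheating case is still tight. A second subtlety is ensuring the edge-check gadget genuinely certifies adjacency rather than merely being satisfiable for free: the gadget must be structured so that satisfying it with the minimum energy is possible only when the two incoming chosen-vertex signals correspond to the endpoints of some edge — e.g., the verification \texttt{OR} ranges over edges $uv \in E(Q[V_i,V_j])$, and an \texttt{AND} inside each branch requires both the ``$u$ chosen in $V_i$'' and ``$v$ chosen in $V_j$'' signals, so an inconsistent (non-adjacent) choice forces either an extra fired gate or an unsatisfied \texttt{AND}. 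Getting these two accounting arguments to line up into a single clean inequality $EC(C,X)\le k \iff$ clique is the crux of the proof.
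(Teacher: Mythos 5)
Your plan is essentially the paper's proof: the paper also reduces from \textsc{Multicolored Clique}, using an \texttt{OR}-gate $w_i$ per color as the selection gadget, adjacency checks realized as \texttt{AND}-gates $b_{vu}^j$ over the two chosen-vertex signals feeding \texttt{OR}-gates $a^j_{v^4}$ (one per vertex and foreign color) below an \texttt{AND}-gate $v^4$, and a budget $k=2c^2+2c+1$ that is tight exactly when one vertex per color is selected and all pairs are adjacent. Your handling of collateral activation (losing branches feed only \texttt{AND}-gates that stay false, and leaks through the per-vertex \texttt{OR}-path are charged against the budget) is exactly how the paper's construction makes the accounting work, so the proposal is sound and follows the same route.
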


\begin{proof}
Let $(Q,c)$ be  an instance of \textsc{Multicolored Clique} and let $V_1,V_2, \dots , V_c$ be the color classes of $Q$. Without loss of generality, we consider that each vertex in $V_i$ has at least one neighbor in $V_j (i \neq j)$. We construct an instance $(C,k)$ of {\sc MinEC$^+_M(k)$} as follows (see Fig.~\ref{fig:w1-Q} and Fig.~\ref{fig:w1-C}):

\begin{enumerate}
\item create an output gate $v_{out}$ in $C$ and set $f(v_{out}) = \texttt{AND}$;
\item for each color $c_i$ of $Q$, create a gate $w_i$ with $f(w_i) =$ \texttt{OR} and add an edge from $w_i$ to $v_{out}$;
\item for each color class $V_i$ of $Q$, create copies $V_{i}^{1}, V_{i}^{2},V_{i}^{3}$ and $V_{i}^{4}$ in C;
\item add edges from each vertex in $V_{i}^{4}$ to $w_i$;
\item let $v^1, v^2, v^3$ and $v^4$ be the copies of a vertex $v \in V(Q)$; add edges $(v^1, v^2), (v^2, v^3)$ and $(v^3, v^4)$ to $G$;
set $V_{i}^{1}$ as the input set; and assign $f(v^2) = f(v^3) = \texttt{OR}$ and $f(v^4) = \texttt{AND}$;
\item for each vertex $v^4 \in V_{i}^{4} (1 \leq i \leq c)$, create $c - 1$ new \texttt{OR}-in-neighbors $a_{v^4}^{j} (1 \leq j \leq c$ and $i \neq j)$, and for each $u^2 \in V_{j}^{2}$ such that $vu \in E(Q)$ create an {\tt AND}-vertex $b_{vu}^j$ and the following edges: ($b_{vu}^j,a_{v^4}^{j}$), ($u^2,b_{vu}^j$) and ($v^2,b_{vu}^j$);


\item finally, set $k = 2c^2 + 2c + 1$.
\end{enumerate}


\begin{figure}[ht]
    \centering
    \includegraphics[width = 4cm]{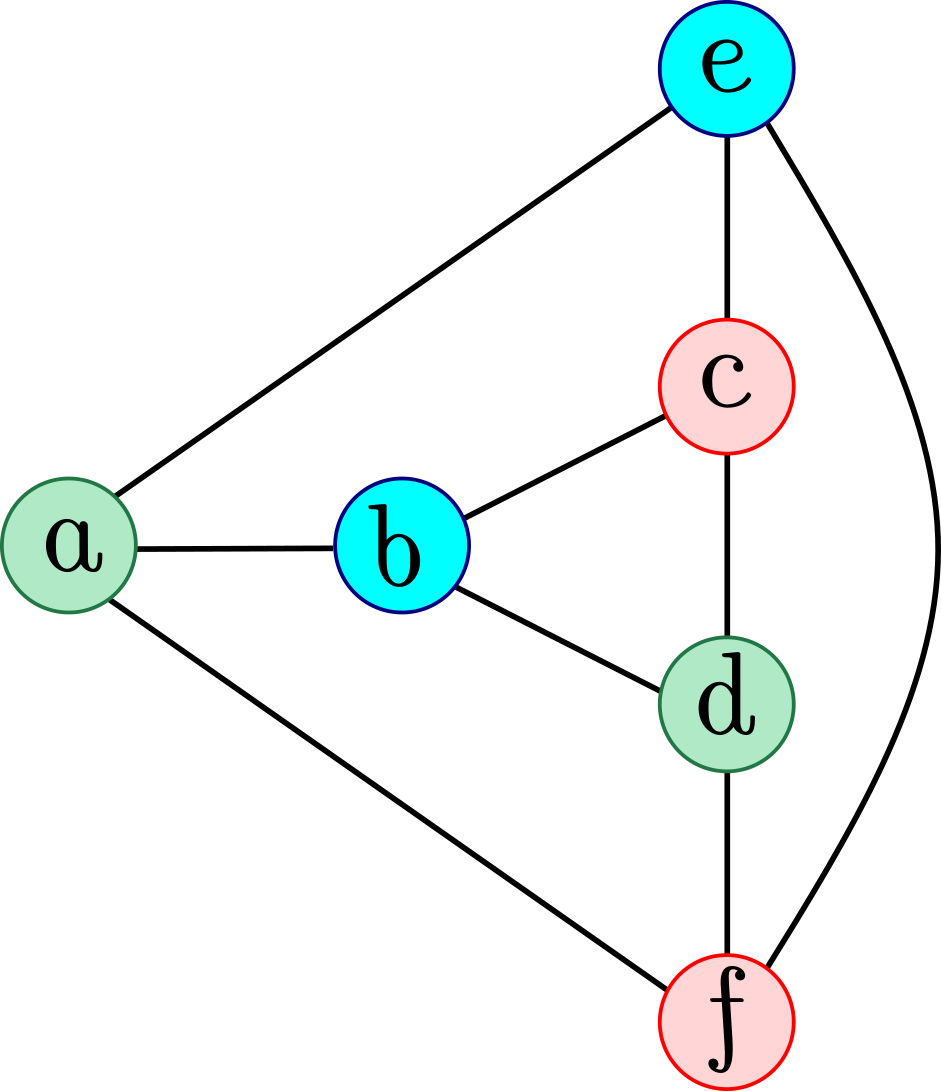}
    \caption{An instance $Q$ for  {\sc Multicolored Clique}}
    \label{fig:w1-Q}
\end{figure}


\begin{figure}[ht]
    \centering
    \includegraphics[width=\textwidth]{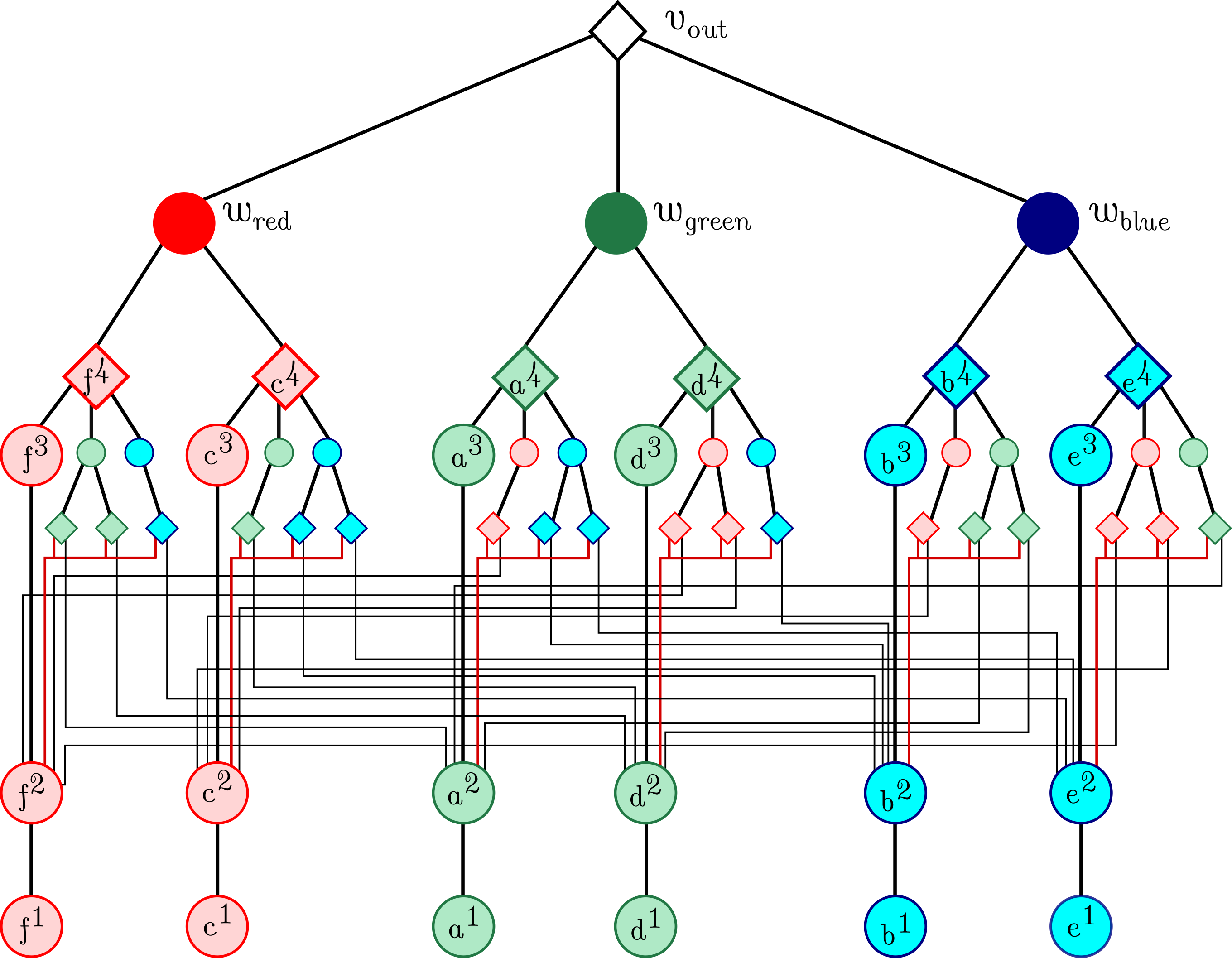}
    \caption{Circuit $C$ obtained from $Q$ (Fig. \ref{fig:w1-Q}) after reduction. The vertices represented as rhombuses are {\tt AND}-gates; the other vertices are {\tt OR}-gates except those with in-degree $0$ (inputs).}
    \label{fig:w1-C}
\end{figure}



If $Q$ contains a multicolored clique $K$ such that $|K| = c$, then it is possible to find a satisfying assignment of $C$ that consumes $k$ energy by mapping the set $S$ of gates/vertices that must be activated (outputs \texttt{true}) as follows: 
(a) $v_{out}$ and all of its in-neighbors must belong to $S$; 
(b) for each \texttt{OR}-gate $w_i \in S$, we want include in $S$ exactly the in-neighbor $v^4\in V^4_i$ such that $v \in K$, therefore, we set $f(v^1)$=\texttt{true} if and only if $v \in K$ (At this point, by construction, for each $v \notin K$ holds that every vertex between $v^1$ and $v^4$ will be inactivated);
(c) for each $v^4 \in S$, all of its in-neighbors must be in $S$, and for each $a^j_{v^4}$ in $S$, its unique in-neighbor in $S$ must be the \texttt{AND}-gate $b_{vu}^j$ such that $f(v^1)=f(u^1)$=\texttt{true} (recall that $K$ has exactly one vertex per color).
(d) finally, a vertex $v^2\in V_2$ belongs to $S$ if and only if its in-neighbor $v^1$ outputs true. 
Through a simple count one can conclude that $|S|=2c^2 + 2c + 1$. Thus, the defined assignment satisfies $C$ by consuming $k$ energy as required.

Conversely, if $C$ has a satisfying assignment $X$ with energy complexity at most $k=2c^2 + 2c + 1$ then it is possible to obtain a multicolored clique $K$ of $Q$ as follows: a vertex $v$ of $Q$ belongs to $K$ if and only if $v^2$ outputs \texttt{true}. 
Since, by construction, any satisfying assignment of $C$ activates at least $2c^2 + 2c + 1$ gates in $V(C)\setminus(V_2\cup V_1)$, the assignment $X$ activates at most $c$ gates in $V_2$. Besides, the construction also implies that at least one input per color must activated in order to satisfy $C$. So, $X$ activates exactly $c$ gates in $V_2$ (one per color). Therefore, $K$ has exactly one vertex per color. Now, to show that $K$ induces a clique is enough to observe the if $v_2$ and $u_2$ are activated in $X$ into $C$ and $vu\notin E(G)$, then for the color $j$ of $u$ holds that $b^j_{vr}$ is inactivated by $X$ into $C$ for any neighbor $r$ of $v$ with color $j$. Thus, $v_4$ and $w_i$ are also inactivated, where $i$ is the color of $v$, which contradicts the fact that $X$ satisfies $C$. Therefore, $K$ induces a clique.
\end{proof}

\subsection{On monotone circuits with bounded genus}\label{sec:genus}



A graph $G$ has \emph{genus} at most $g$ if it can be drawn on a surface of genus $g$ (a sphere with $g$ handles) without edge intersections (see Fig.~\ref{fig:genus}). We consider the genus of a circuit as the genus of its underlying undirected graph. Additionally, We refer the reader to~\cite{gross2001topological} for more information on the genus of a graph.

\begin{figure}[!ht]
\begin{subfigure}[b]{.32\linewidth}\centering \includegraphics[width=\linewidth]{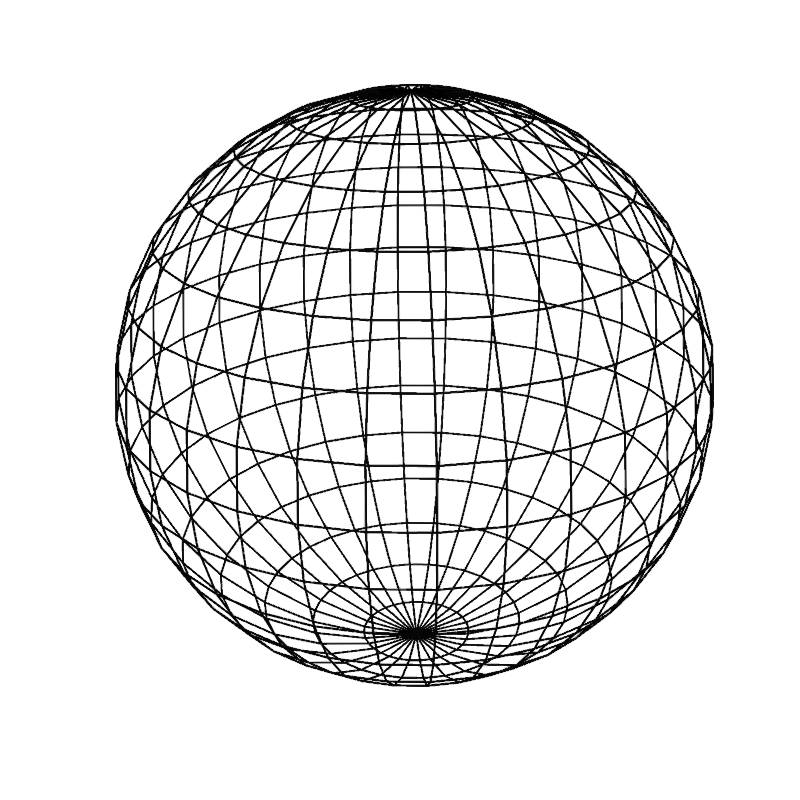} \caption{Genus 0.}\label{fig:genus0}
\end{subfigure}
\hfill
\begin{subfigure}[b]{.32\linewidth}\centering \includegraphics[width=\linewidth]{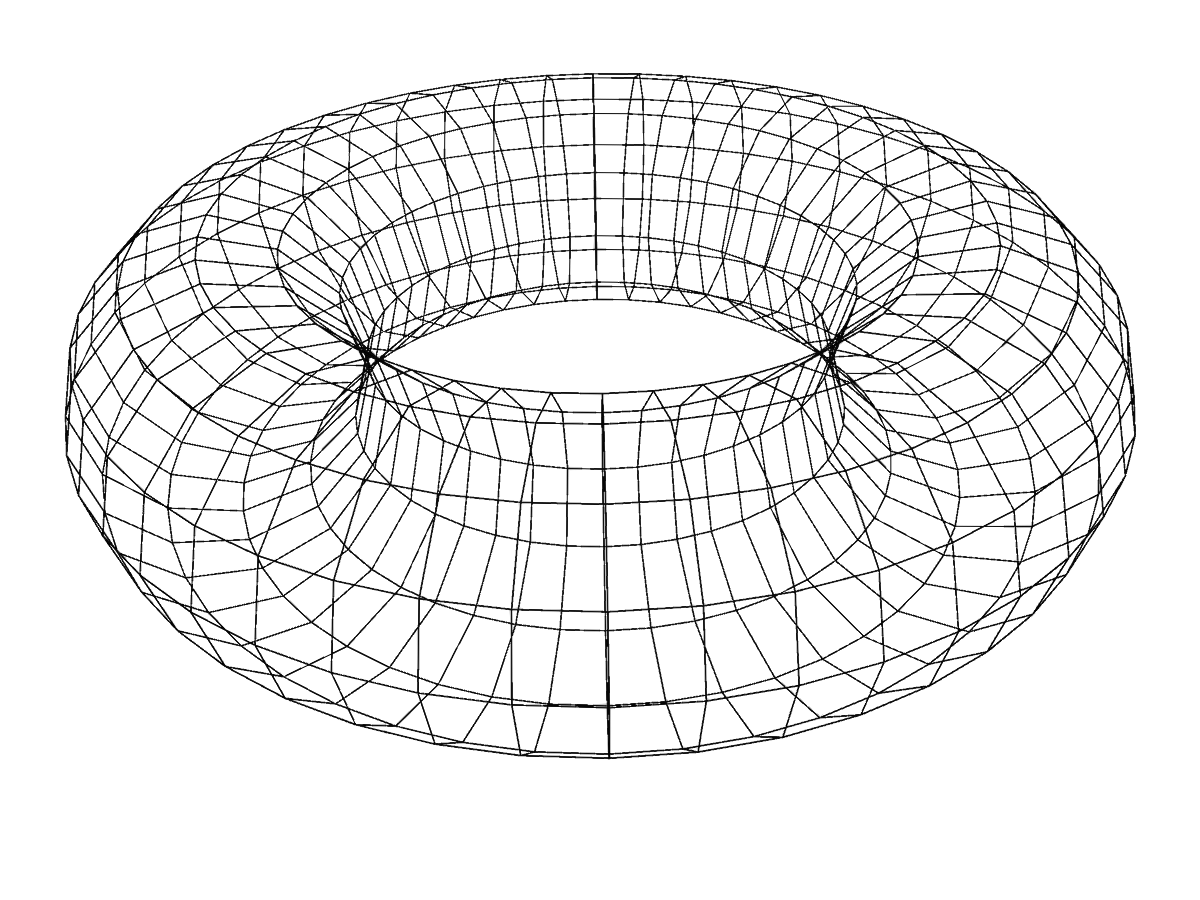} \caption{Genus 1.}\label{fig:genus1}
\end{subfigure}
\hfill
\begin{subfigure}[b]{.32\linewidth}\centering \includegraphics[width=\linewidth]{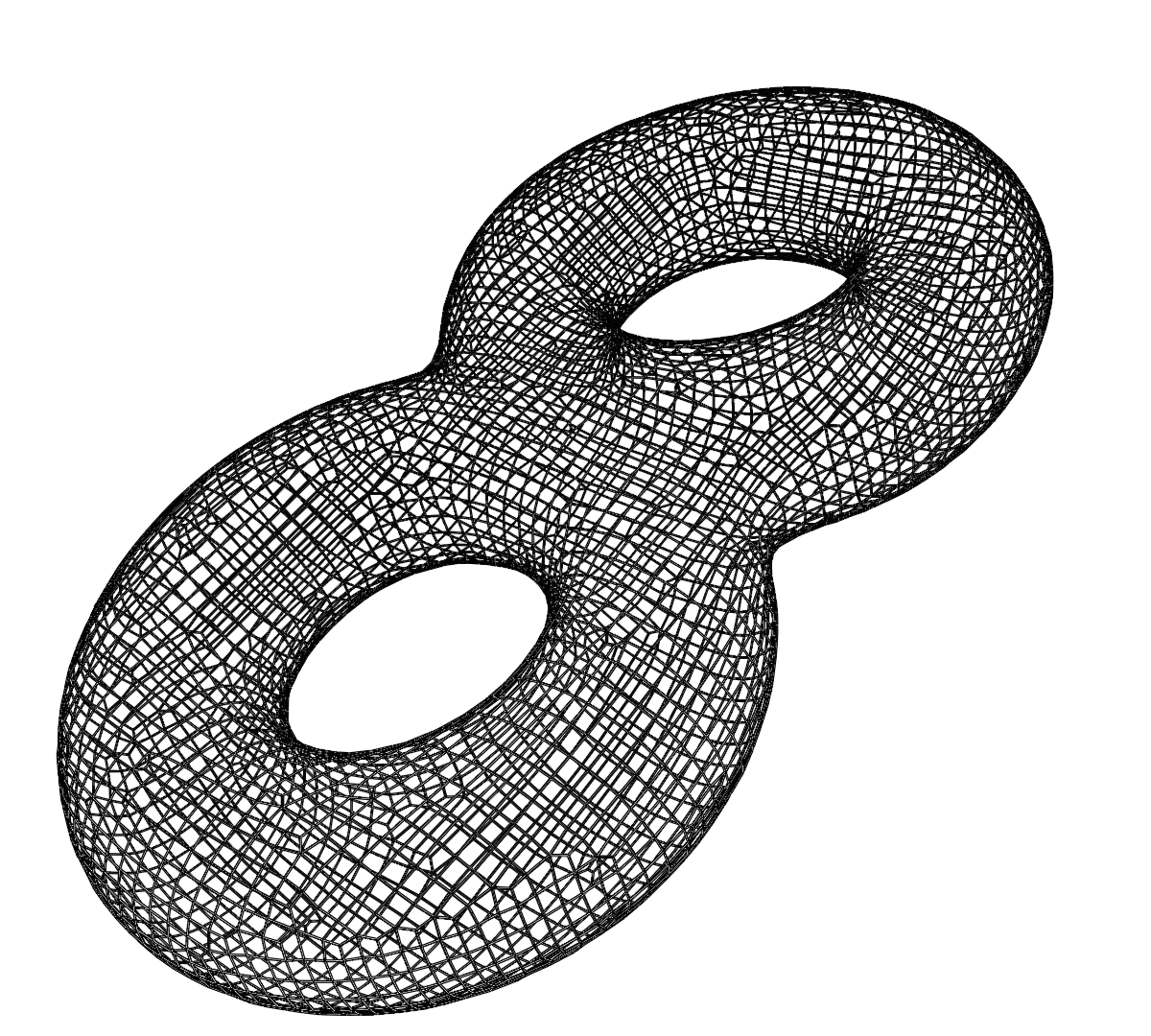} \caption{Genus 2}\label{fig:genus2}
\end{subfigure}
\caption{Relation between genus and geometrical shapes.}\label{fig:genus}
\end{figure}

In this section, we show that {\sc $k$-MinEC$^+_M$} on bounded genus circuits can be reduced to {\sc $k$-MinEC$^+_M$} on bounded treewidth circuits.

\begin{definition}
A \emph{tree decomposition} of a undirected graph $G$ is a pair $\mathcal{T} = (T,\{X_t\}_{t \in V(T)})$, such that $T$ is a tree where each node $t$ is assigned to a set of vertices $X_t \subseteq V(G)$, called \emph{bags}, according to the following conditions:
\begin{itemize}
\item $\bigcup_{t \in V(T)} X_t = V(G)$;
\item For each $uv \in E(G)$ there is a node $t$ such that $\{u,v\} \subseteq X_t$;
\item For each $v \in V(G)$, the set $T_v = \{t \in V(T) : v \in X_t\}$ spans a subtree of $T$.
\end{itemize}
\end{definition}

The \emph{width} of a tree decomposition $\mathcal{T}$ is the size of its largest bag minus one. The treewidth of $G$ is the minimum width among all tree decompositions of $G$.

\begin{definition}
A  graph $H$ is a \emph{minor} of a graph $G$ if $H$ can be constructed from $G$ by deleting vertices or edges, and contracting edges.
\end{definition}


\begin{theorem}[Excluded Grid Theorem~\cite{robertson1994quickly}]\label{theo:grid}
Let $t$ be a non-negative integer. Then every planar graph $G$ of treewidth at least $9t/2$ contains a grid $t \times t$ as a minor.
\end{theorem}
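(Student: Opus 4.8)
The plan is to run the classical chain of reductions behind the planar Excluded Grid Theorem: \emph{large treewidth} $\Rightarrow$ \emph{large branchwidth} $\Rightarrow$ (exploiting that branchwidth of a plane graph is essentially self-dual) \emph{a deeply nested family of concentric cycles} $\Rightarrow$ \emph{a large wall / cylindrical grid} $\Rightarrow$ \emph{a $t \times t$ grid minor}. The whole qualitative argument is standard; the only delicate point, and the reason the theorem is quoted rather than reproved, is to bookkeep the constants so that the hypothesis ``treewidth $\ge 9t/2$'' is exactly what feeds the construction.

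First I would replace treewidth by branchwidth, using the Robertson--Seymour inequality $\mathrm{bw}(G)\le \mathrm{tw}(G)+1\le \lfloor \tfrac{3}{2}\,\mathrm{bw}(G)\rfloor$ (valid once $\mathrm{bw}(G)\ge 2$): a planar graph of treewidth at least $9t/2$ then has branchwidth at least $3t$, up to rounding. Branchwidth is the convenient parameter in the plane because it is dual to \emph{tangles} ($\mathrm{bw}(G)\ge\theta$ iff $G$ has a tangle of order $\theta$) and, for connected plane graphs, it is preserved up to an additive constant by passing to the planar dual.

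Next, fixing a plane embedding, I would use the fact that a tangle of large order in a plane graph is \emph{localized}: it orients every low-order separation consistently towards one region of the sphere, which pins down a point $x$ of the plane together with a family $C_1 \supseteq C_2 \supseteq \cdots$ of pairwise vertex-disjoint cycles that are concentric around $x$ --- each contained in the open disk bounded by the previous one and each separating $x$ from the ``big side'' of the tangle. The number of such cycles is linear in the order of the tangle, so choosing the tangle order to be $3t$ and tracking the proportionality constant yields $t$ of them. A Menger-type linkage argument inside the resulting planar annulus then produces $t$ pairwise vertex-disjoint ``radial'' paths crossing all $t$ cycles in a consistent cyclic order; the union of the cycles and the radii is a subdivision of a cylindrical grid, from which the $t\times t$ grid is obtained by deletions and contractions, i.e.\ it is a minor of $G$.

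The hard part is the middle step. The implication ``large planar treewidth $\Rightarrow$ many concentric cycles'' is easy to prove \emph{qualitatively} (e.g.\ via brambles, or via outerplanarity layerings), but obtaining it with a \emph{linear} bound and the sharp constant $9/2$ requires the Robertson--Seymour--Thomas analysis of branchwidth of plane graphs through planar duality and the ratcatcher argument; I would simply invoke that analysis here. By comparison, the treewidth--branchwidth conversion and the passage from concentric cycles and radii to an honest $t\times t$ grid minor are routine.
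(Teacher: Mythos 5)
This statement is not proven in the paper at all: it is the planar Excluded Grid Theorem of Robertson and Seymour(--Thomas), imported verbatim by citation to~\cite{robertson1994quickly} and used as a black box. There is therefore no in-paper proof to compare your attempt against.

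Taken on its own terms, your sketch correctly identifies the standard architecture of the known proof (treewidth $\to$ branchwidth, duality/tangles in the plane, concentric cycles plus a Menger linkage, cylindrical grid $\to$ grid minor), but it is not a proof. You say explicitly that the middle implication --- large planar branchwidth yields linearly many concentric cycles with the constant needed to land on $9t/2$ --- ``requires the Robertson--Seymour--Thomas analysis\ldots; I would simply invoke that analysis here.'' That analysis \emph{is} the theorem: everything you do prove (the inequality $\mathrm{bw}(G)\le \mathrm{tw}(G)+1\le\lfloor\tfrac32\,\mathrm{bw}(G)\rfloor$, the extraction of a grid minor from a nested cycle family crossed by disjoint radial paths) is the routine periphery, and the quantitative core is deferred to the very reference the paper cites. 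Two smaller points: self-duality of branchwidth for connected plane graphs is exact (Seymour--Thomas), not merely ``up to an additive constant''; and the claim that a tangle of order $3t$ yields $t$ concentric cycles with the right proportionality constant is precisely the bookkeeping you cannot wave at if the target is the specific bound $9t/2$ rather than ``some linear function.'' As written, the proposal reduces the cited theorem to itself, which is the appropriate thing to do in this paper --- but then the honest form of the argument is a citation, not a proof.
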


From the Excluded Grid Theorem, it is easy to see that there is a connection between the diameter of a planar graph and its treewidth. In~\cite{robertson1984graph}, Robertson and Seymour presented a bound for the treewidth of a planar graph with respect to its radius, which also implies a bound regarding the diameter.

\begin{definition}
For every face $F$ of a planar embedding $M$, we define $d(F)$ to be the minimum value of $r$ such that there is a sequence $F_0,F_1,\ldots,F_r$ of faces of $M$, where $F_0$ is the external face, $F=F_r$, and for $1\leq j\leq r$ there is a vertex $v$ incident with both $F_{j-1}$ and $F_{j}$. The radius $\rho(M)$ of $M$ is the minimum value $r$ such that $d(F)\leq r$ for all faces $F$ of $M$. The radius of a planar graph is the minimum of the radius of its planar embeddings.
\end{definition}

\begin{theorem}[Radius Theorem~\cite{robertson1984graph}]\label{theo:radius}
If $G$ is planar and has radius at most $r$ then its treewidth is at most $3r+ 1$.
\end{theorem}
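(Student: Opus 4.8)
\textbf{Proof proposal for the Radius Theorem (Theorem~\ref{theo:radius}).}

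The plan is to use the Excluded Grid Theorem (Theorem~\ref{theo:grid}) together with a layering argument based on the distance function $d(F)$ on faces. Suppose $G$ is planar with a planar embedding $M$ of radius at most $r$, and assume toward a contradiction that the treewidth of $G$ exceeds $3r+1$. I would first reduce the question about the radius of $M$ to a statement about a BFS-type layering of the vertices: define, for each vertex $v$, its level $\lambda(v)$ to be the least index $j$ such that $v$ is incident with a face $F$ with $d(F)=j$ (equivalently, the "face-distance" of $v$ from the outer face). Since $\rho(M)\le r$, every face, hence every vertex, gets a level in $\{0,1,\dots,r\}$, so there are at most $r+1$ levels. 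The key combinatorial fact to extract from the definition of $d(F)$ is that any edge $uv$ of $G$ joins vertices whose levels differ by at most one: consecutive faces $F_{j-1},F_j$ in a shortest face-sequence share a vertex, and walking around a face keeps you within one level, so the embedding is "BFS-layered" with at most $r+1$ layers.

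Next I would invoke (a routine consequence of) the Excluded Grid Theorem in the following form: a planar graph whose vertex set admits a partition into $\le \ell$ layers such that every edge stays within consecutive layers has treewidth $O(\ell)$; more precisely one shows treewidth at most $3\ell$ or so, matching the $3r+1$ bound once $\ell = r+1$. The cleanest route is the contrapositive via grids: if $\mathrm{tw}(G)\ge 9t/2$ then $G$ contains a $t\times t$ grid minor, and a $t\times t$ grid, being a subdivision-like structure, cannot be squeezed into fewer than $t$ BFS-layers — intuitively, a $t\times t$ grid has face-radius (in any planar embedding) at least roughly $t/2$, because the central face of the grid is $\Omega(t)$ face-steps from the outer face and minors can only decrease this quantity in a controlled way. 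Combining these, $\mathrm{tw}(G)\ge 9t/2 \Rightarrow$ grid minor $\Rightarrow$ radius $\ge c\cdot t$ for an absolute constant $c$; solving for the constants so that the bound comes out as exactly $3r+1$ is the bookkeeping part.

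The main obstacle I expect is precisely this constant-tracking step: going from the qualitative "treewidth is linear in the number of layers" to the sharp inequality $\mathrm{tw}(G)\le 3r+1$. The Excluded Grid Theorem as quoted gives $\mathrm{tw}(G)\le 9t/2 - 1$ whenever $G$ has no $t\times t$ grid minor, so I would need a clean lemma of the form "a planar graph of radius $r$ has no $(2r+2)\times(2r+2)$ grid minor" (or some such explicit bound), and then verify $9(2r+2)/2 - 1$ is comparable to — though not literally equal to — $3r+1$; getting the \emph{exact} constant $3r+1$ almost certainly requires the original, more hands-on argument of Robertson and Seymour rather than a black-box appeal to the grid theorem, so the honest thing is either to present the direct layered tree-decomposition construction (build bags from pairs of consecutive layers, widths controlled by a planar separator within each layer) or to simply cite~\cite{robertson1984graph} for the sharp constant while using the grid-theorem argument to convey the idea. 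I would favour the direct construction: take the layering $\lambda$, and for each $i$ form a tree decomposition whose bags are supported on layers $i-1,i,i+1$, gluing these along a path; bounding each bag by $3r+1$ then reduces to the fact that within three consecutive layers the relevant "boundary" has size at most $r+1$ in each, which follows from planarity and the nesting structure of the face-distance layers.
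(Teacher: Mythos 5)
This is a theorem the paper imports verbatim from Robertson and Seymour~\cite{robertson1984graph}; the paper gives no proof of its own, so your attempt can only be judged on its own merits. Your setup is fine: the level function $\lambda(v)=\min\{d(F): v \text{ incident with } F\}$ is well defined, takes values in $\{0,\dots,r\}$, and your observation that every edge joins vertices whose levels differ by at most one is correct (both endpoints of $uv$ lie on a common face $F'$, and $d(F')\le \lambda(u)+1$). But neither of the two routes you then sketch actually reaches the stated bound. The grid-minor route you concede yourself: excluding a $(2r+2)\times(2r+2)$ grid minor via the Excluded Grid Theorem as quoted yields treewidth at most roughly $9r+8$, not $3r+1$, and no amount of bookkeeping closes that factor-of-three gap, because the $9t/2$ constant in Theorem~\ref{theo:grid} is not tight enough for this purpose.

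The direct construction is where the real gap lies. You propose bags ``supported on layers $i-1,i,i+1$'' glued along a path, and assert that bounding each bag by $3r+1$ ``reduces to the fact that within three consecutive layers the relevant boundary has size at most $r+1$ in each.'' That fact is not true and does not follow from planarity: a single layer of the face-distance layering can contain arbitrarily many vertices (consider a large wheel, which has radius $1$), so a bag containing three whole layers has unbounded size. The correct reading of the constant $3r+1$ is that each bag contains at most about \emph{three vertices per layer}, i.e.\ $3(r+1)$ vertices in total, not three whole layers. Obtaining such a decomposition is the actual content of the theorem: one inducts on the radius, peeling off the outermost layer (which induces an outerplanar structure), builds a decomposition of the inner part by induction, and extends each bag by a bounded number of outer-layer vertices using the nesting of the faces --- essentially the same induction that shows $k$-outerplanar graphs have treewidth at most $3k-1$. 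That inductive mechanism, which is what controls the per-layer contribution to each bag, is entirely absent from your sketch, so the proposal does not constitute a proof; citing~\cite{robertson1984graph} as the paper does, or reproducing the outerplanarity induction in full, is required.
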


Using Theorem~\ref{theo:radius} we are able to either solve {\sc MinEC$^+_M$} on planar circuits or outputs an equivalent instance $C'$ with treewidth bounded by a function of $k$. 



\begin{lemma}\label{theo:twbounded}
Let $(C,k)$ be an instance of {\sc MinEC$^+_M$}.
There is an algorithm that in polynomial time either solves $(C,k)$ or outputs an equivalent instance $(C',k)$ of {\sc MinEC$^+_M$} where the distance from $v_{out}$ to each vertex in the underlying graph of $C'$ is at most $2k + 1$.
\end{lemma}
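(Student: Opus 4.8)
The plan is to preprocess the circuit by iteratively removing all gates that are too far from $v_{out}$ to ever matter. First I would fix a BFS layering of the underlying undirected graph of $C$ rooted at $v_{out}$, and let $L_d$ denote the set of vertices at distance exactly $d$ from $v_{out}$. The key observation is that in any satisfying assignment $X$ with $EC(C,X)\le k$, the set of activated gates must, when traced back from $v_{out}$, form a connected "backbone" reaching the inputs: for every activated $\texttt{AND}$-gate all in-neighbors are activated (or are inputs), and for every activated $\texttt{OR}$-gate at least one in-neighbor is activated (or is an input). Hence a satisfying assignment induces a subtree/subgraph of activated gates rooted at $v_{out}$ whose depth is bounded by the number of non-input gates it uses, which is at most $k$. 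Consequently, no gate at undirected distance more than $2k+1$ from $v_{out}$ can be relevant — more precisely, any such gate that is activated contributes to the energy count without being reachable from $v_{out}$ through a short enough activated path.

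The main steps, in order, would be: (1) Compute the BFS distances from $v_{out}$. (2) Argue that if some gate $v$ at distance $>2k+1$ must be activated in order to satisfy $v_{out}$ — i.e. every satisfying backbone passes through the "layer" containing $v$ — then $(C,k)$ is a no-instance, because traversing from $v_{out}$ out to that layer and back already forces more than $k$ activated gates along the way; in that case the algorithm simply outputs "no". (3) Otherwise, delete from $C$ all vertices at distance greater than $2k+1$ from $v_{out}$, together with incident edges; call the result $C'$. I would need to check that this deletion is "safe": any $\texttt{OR}$-gate that loses some in-neighbors still behaves correctly (an $\texttt{OR}$-gate with no remaining in-neighbors becomes unsatisfiable and can be replaced by a constant-$\texttt{false}$ input, or we argue it is never on a cheap backbone); any $\texttt{AND}$-gate that loses an in-neighbor would become unsatisfiable, so if such a gate lies on every cheap backbone the instance is "no", otherwise we may delete it. The cleanest formulation: repeatedly delete gates $v$ with $\mathrm{dist}(v_{out},v) > 2k+1$; show each deletion preserves the existence (and energy cost) of satisfying assignments with energy $\le k$, or detects a no-instance. (4) Conclude that the surviving instance $(C',k)$ is equivalent, and every vertex is within undirected distance $2k+1$ of $v_{out}$.

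The delicate part will be handling the "leak" phenomenon illustrated in Figure~\ref{fig:anomaly}: gates can be activated even when their signal does not propagate to $v_{out}$, so "distance from $v_{out}$" in the underlying \emph{undirected} graph, rather than in the DAG, is the right notion, and I must make sure the bound $2k+1$ (not just $k$) correctly accounts for a backbone that goes "down" from $v_{out}$ to the inputs and for edges hanging off it. The subtlety is that a deleted far-away gate could, in principle, be an in-neighbor of a \emph{close} gate, so I must verify that severing such an edge never turns a yes-instance into a no-instance or vice versa — this is where monotonicity is used: removing an in-neighbor of an $\texttt{OR}$-gate can only make satisfying harder, and any in-neighbor at distance $>2k$ from $v_{out}$ is itself at distance $>2k+1$ and hence on no cheap backbone, so its removal as a \emph{potential} activator is harmless for energy-$\le k$ assignments. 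Carefully bookkeeping these cases, and arguing the whole procedure runs in polynomial time (a single BFS plus linear-time deletion), completes the proof.
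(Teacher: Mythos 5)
Your overall strategy---prune the circuit by BFS distance from $v_{out}$ with cascading deletion rules (\texttt{AND}-gates that lose an in-neighbor, \texttt{OR}-gates that lose all in-neighbors), iterate, and declare a no-instance if the circuit collapses---is the same family of argument as the paper's. The paper uses a two-tier criterion (delete inputs at distance $>k$ from $v_{out}$, then delete gates at distance $>k+1$ from their nearest surviving input, keeping the distance-$(k+1)$ layer as witnesses of excessive energy), and the $2k+1$ bound falls out as a corollary; you instead aim the single threshold $2k+1$ at all vertices directly. That difference is cosmetic. The substantive problem is that your safety argument for the crucial direction is not right as stated. Equivalence requires that deleting a far gate cannot \emph{decrease} the minimum energy: if some satisfying assignment $X'$ of $C'$ has $EC(C',X')\leq k$, its lift $X$ to $C$ must not activate any deleted gate, since energy counts \emph{every} activated gate, not only those on the certificate reaching $v_{out}$. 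You justify the deletions by saying a deleted gate is ``on no cheap backbone,'' but being off the backbone is exactly the leak phenomenon you flagged at the start and does not make an activated gate harmless; this is the one place where the distinction you correctly identified actually has to be used, and your sketch drops it. (There is also a slip: distance $>2k$ does not imply distance $>2k+1$.)

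The missing argument is the following normalization plus chain-counting, which is implicit in the paper's rules. For a satisfying assignment of energy at most $k$ one may assume minimality (only inputs feeding the certificate are \texttt{true}); then every \texttt{true} input is within distance $k$ of $v_{out}$, every activated gate is the endpoint of a directed chain of at most $k$ activated gates starting at a \texttt{true} input, hence lies within distance $2k$ of $v_{out}$, so no deleted gate is ever activated and the energy is preserved in both directions. Conversely, any activation chain that would reach a gate at distance $>2k+1$ must first traverse at least $k+1$ surviving activated gates, so the surviving circuit already certifies energy $>k$; this is precisely why the paper insists on keeping the buffer layer at distance exactly $k+1$ from the nearest input rather than deleting everything that is forced \texttt{false}. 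Finally, your step (2) (``if some gate at distance $>2k+1$ must be activated\dots output no'') is not an effective algorithmic test as written; the paper replaces it by the purely syntactic cascade (Rules 3--5) together with the emptiness check. Without these pieces your plan does not yet constitute a proof.
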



\begin{proof}
From an instance $(C,k)$ of {\sc MinEC$^+_M$}, we apply the following reduction rules to obtain $C'$:
\begin{enumerate}
\item Delete every input vertex which is at a distance greater than $k$ to $v_{out}$;
\item Delete every vertex which is at a distance greater than $k + 1$ from its nearest input vertex;
\item Delete any {\tt AND}-vertex which lost one of its in-neighbors;
\item Delete any {\tt OR}-vertex in which its in-degree became equal to $0$;
\item Repeat steps 1 to 4 as long as possible;
\item If $C'=\emptyset$ then we conclude that $(C,k)$ is a {\em no}-instance of {\sc MinEC$^+_M$}.
\end{enumerate}

We now discuss the safety of the previous reduction rules: if an input vertex $v$ is at a distance greater than $k$ from $v_{out}$, since $C$ is monotone, then $v$ is not useful to satisfy $v_{out}$ in any assignment $X$ with $EC(C,X)\leq k$, thus we can assume that $v$ outputs \texttt{false} and given the monotonicity of $C$ we can safely remove $v$ (Rule 1). Similarly, gates that are at a distance greater than $k$ from its nearest input vertex must output \texttt{false} in an assignment $X$; otherwise, $X$ consumes energy greater than $k$. Note that vertices at a distance exactly $k+1$ from its nearest input vertex can be useful to show that a given assignment consumes energy greater than $k$. However, gates at a distance of at least $k+2$ from its nearest input vertex can be removed once its neighbors are sufficient to certify the negative answer (Rule 2). Besides, if for any assignment $X$ with $EC(C,X)\leq k$ holds that some (resp. every) in-neighbor of an {\tt AND}(~resp. {\tt OR})-vertex $v$ must output {\tt false}, then $v$ must output {\tt false} as well. Thus, Rule 3 and Rule 4 are safe. From the safety of rules 1-4, it follows that Rule 5 and Rule 6 are safe. 
Finally, if $C'\neq \emptyset$ then $C'$ has only vertices at a distance at most $2k + 1$ from $v_{out}$ in the underlying undirected graph of $C'$.
\end{proof}


Note that the underlying undirected graph of the circuits obtained from Lemma~\ref{theo:twbounded} have diameter bounded by $4k + 2$. Therefore, contrasting with the W[1]-hardness for the general case, Corollary~\ref{cor:indegree} holds.

\begin{corollary}\label{cor:indegree}
{\sc $k$-MinEC$^+_M$} is fixed-parameter tractable when restricted to monotone circuits having bounded maximum in-degree. 
\end{corollary}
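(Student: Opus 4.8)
The plan is to combine Lemma~\ref{theo:twbounded} with a standard treewidth bound and then invoke a dynamic-programming algorithm for bounded-treewidth circuits. First I would apply the polynomial-time reduction from Lemma~\ref{theo:twbounded} to the input instance $(C,k)$. Either this already decides the instance, or it produces an equivalent instance $(C',k)$ in which every vertex of the underlying undirected graph $G'$ of $C'$ lies at distance at most $2k+1$ from $v_{out}$; hence $G'$ has radius at most $2k+1$ in the BFS-tree sense, and in particular bounded diameter. The point of restricting to bounded maximum in-degree $\Delta$ is that, with $v_{out}$ within distance $2k+1$ of every vertex, the total number of vertices of $C'$ is at most roughly $1 + \Delta + \Delta^2 + \cdots + \Delta^{2k+1} = O(\Delta^{2k+2})$, a function of $k$ and $\Delta$ only. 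So in the bounded in-degree regime the kernel $C'$ has size bounded by a function of the parameter, and one can simply solve {\sc MinEC}$^+_M$ on $C'$ by brute force over all assignments to the (boundedly many) inputs, which is FPT.

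Alternatively, and more in the spirit of the section, I would not even need the in-degree hypothesis to bound $|V(C')|$: I could note that the treewidth of $C'$ is bounded in terms of $k$ when $C'$ is planar (via Theorem~\ref{theo:radius}, since radius $\le 2k+1$ gives treewidth $\le 3(2k+1)+1 = 6k+4$) and then run the bounded-treewidth dynamic program that the paper develops for {\sc $k$-MinEC$^+_M$} parameterized by treewidth (the FPT algorithm announced in the introduction). Since the corollary only claims FPT for bounded in-degree without a planarity assumption, the cleanest self-contained argument is the counting one: bounded in-degree plus bounded radius forces a kernel of size $f(k,\Delta)$, after which any decision procedure — even exhaustive search — runs in time depending only on $k$ and $\Delta$ times a polynomial in the original input size (the cost of computing the kernel). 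I would state both routes but carry out the counting argument in detail, as it is the one that exactly matches the hypothesis.

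The key steps, in order, are: (i) run the reduction of Lemma~\ref{theo:twbounded} in polynomial time, disposing of the case $C'=\emptyset$ immediately; (ii) observe that in $C'$ every vertex is within distance $2k+1$ of $v_{out}$ in $G'$; (iii) use the bounded maximum in-degree $\Delta$ to argue that the in-neighbourhood tree rooted at $v_{out}$ has at most $\sum_{j=0}^{2k+1}\Delta^j$ vertices, and that every vertex of $C'$ must be an ancestor-reachable predecessor of $v_{out}$ (else it would have been deleted or is irrelevant), so $|V(C')| \le h(k,\Delta)$ for an explicit function $h$; (iv) conclude that the number of inputs of $C'$ is at most $h(k,\Delta)$, enumerate all at most $2^{h(k,\Delta)}$ assignments, evaluate each in polynomial time, and return \emph{yes} iff some assignment satisfies $C'$ with energy at most $k$; (v) chain the equivalences through Lemma~\ref{theo:twbounded} to conclude correctness for the original instance. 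Total running time is $2^{h(k,\Delta)} \cdot n^{O(1)}$, which is FPT in the combined parameter, and in particular FPT in $k$ for any fixed in-degree bound.

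The main obstacle — really the only subtle point — is step (iii): I must be careful that the distance bound from Lemma~\ref{theo:twbounded} is in the \emph{underlying undirected} graph, so reachability of $v_{out}$ from a vertex along directed edges is not automatic; however, after the reduction every remaining vertex has a directed path to $v_{out}$ (a vertex feeding nothing useful into $v_{out}$ would have its out-neighbours eventually deleted by Rules 3--4, cascading back to it via Rule 4, or it is an input too far from $v_{out}$ and removed by Rule 1), so one can indeed bound $|V(C')|$ by exploring from $v_{out}$ backwards along in-edges with branching at most $\Delta$ and depth at most $2k+1$. Once that structural claim is pinned down, the rest is a routine kernelization-plus-brute-force argument.
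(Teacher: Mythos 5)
Your overall strategy --- run the reduction of Lemma~\ref{theo:twbounded}, argue that bounded in-degree forces $|V(C')|$ to be bounded by a function of $k$ and $\Delta$, and finish by brute force on the kernel --- is the same as the paper's. But the counting in your step (iii) has a genuine gap, and you half-noticed it yourself. The exploration ``from $v_{out}$ backwards along in-edges with branching at most $\Delta$ and depth at most $2k+1$'' only enumerates vertices that have a \emph{directed} path of length at most $2k+1$ to $v_{out}$. Lemma~\ref{theo:twbounded} only guarantees that every vertex is within \emph{undirected} distance $2k+1$ of $v_{out}$, and your patch --- ``every remaining vertex has a directed path to $v_{out}$'' --- even if granted, says nothing about the \emph{length} of that directed path. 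An undirected path of length at most $2k+1$ may traverse edges against their orientation, and resuming the forward direction corresponds to branching by out-degree, which is not bounded by hypothesis. Concretely: take one input $i$ with out-edges to {\tt OR}-gates $h_1,\dots,h_n$ forming a chain $h_1\to h_2\to\cdots\to h_n\to v_{out}$. Every vertex is within undirected distance $3$ of $v_{out}$, the maximum in-degree is $2$, none of Rules 1--4 fires, yet the only directed path from $h_1$ to $v_{out}$ has length $n$; your backward BFS to depth $2k+1$ therefore sees only $O(k)$ of the $h_j$'s, and the claimed bound $|V(C')|\le\sum_{j=0}^{2k+1}\Delta^j$ fails.

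The paper's own proof counts in two stages and in opposite directions: first, every surviving \emph{input} has a directed path of length at most $k$ to $v_{out}$ (Rule 1), so a reverse BFS from $v_{out}$ of depth $k$ with branching at most $c$ bounds the number of inputs by $c^k$; second, every surviving \emph{gate} is at directed distance at most $k+1$ \emph{from} its nearest input (Rule 2), and the gates are counted by a search forward from the boundedly many inputs. Your single backward sweep cannot replace this, because the second stage walks along out-edges, not in-edges. (Note, for what it is worth, that the forward stage branches by out-degree, so even the paper's count implicitly uses more than the stated in-degree bound; but in any case the argument needed is the two-directional one, not the one you wrote.) Your alternative route via Theorem~\ref{theo:radius} does not apply either, since the corollary makes no planarity assumption, as you yourself observe. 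As written, step (iii) is false and the subsequent brute-force step is not FPT.
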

\begin{proof}
First, apply the preprocessing algorithm presented in Lemma~\ref{theo:twbounded}.  After that, each input vertex is at a distance of at most $k$ from $v_{out}$ in the resulting circuit $C'$. Since each input vertex of $C'$ must reach $v_{out}$, $C'$ is acyclic, and it has bounded maximum in-degree, then from a BFS algorithm having $v_{out}$ as root (considering the reverse order of edges) we can visit all input vertices using at most $k$ levels, implying that the number of input vertices in $C'$ is bounded by $c^k$, where $c$ is its maximum in-degree. Also, each gate of $C'$ is at distance at most $k+1$ from its nearest input vertex, and since the number of input vertices in $C'$ is bounded, by a similar argument, the number of gates of $C'$ is also bounded by a function of $k$. 
Thus, when the input circuit $C$ has bouded maximum in-degree, after the preprocessing of Lemma~\ref{theo:twbounded}, the resulting circuit $C'$ is a kernel for {\sc $k$-MinEC$^+_M$}, implying its fixed-parameter tractability.   
\end{proof}

Notice that a gate with large in-degree can always be replaced by a binary tree using only binary gates, but for or-gates it makes a relevant difference in the energy complexity. Therefore, replacing large in-degree gates is not a useful strategy for dealing with {\sc $k$-MinEC$^+_M$}.
On the other hand, Lemma~\ref{theo:twbounded} also implies that if $C'$ is planar then it also has bounded radius
, thus, by Theorem~\ref{theo:radius}, it follows that the underlying undirected graph of $C'$ has treewidth bounded by a function of $k$.
We extend the previous reasoning for bounded genus circuits.

Given a vertex-set $S\subseteq V(G)$ of a simple graph $G$ such that the subgraph of $G$ induced by $S$, denoted $G[S]$, is connected, contracting $S$ means contracting the edges between the vertices in $S$ to obtain a single vertex at the end. We say that a graph $H$ is an {\em s-contraction} of a graph $G$ if $H$ can be obtained after applying to $G$ a (possibly empty) sequence of edge contractions.

\begin{figure}[!ht]
    \centering
    \includegraphics[width=0.7\textwidth]{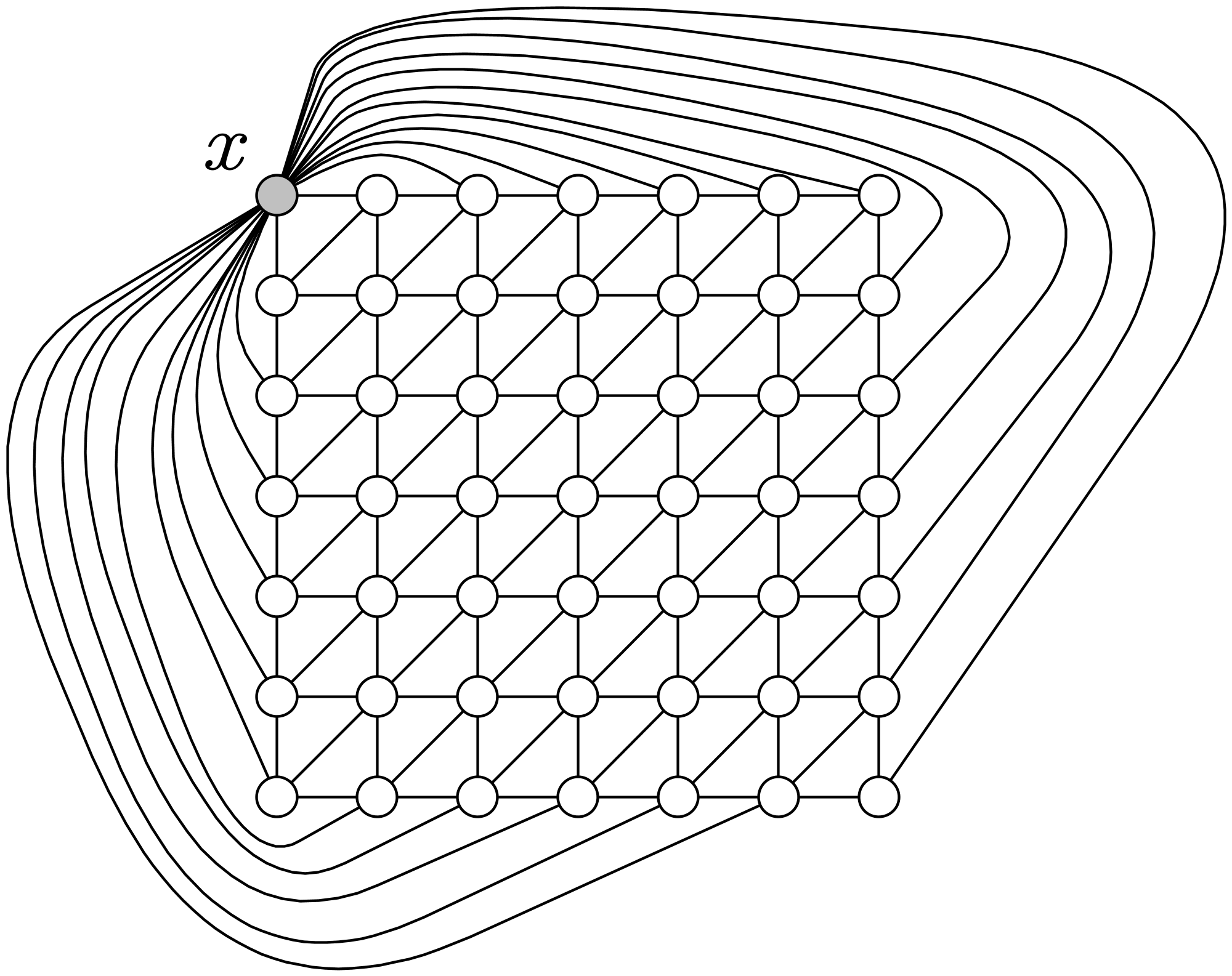}
    \caption{$\Gamma_7$ representation.}
    \label{fig:Gamma}
\end{figure}

The following is a construction presented in~\cite{fomin2011contraction} and \cite{kanj2017parameterized}. Consider an $(r\times r)$-grid. A corner vertex of the grid is a vertex of the grid of degree 2. By $\Gamma_r$ we denote the graph obtained from the $(r\times r)$-grid as follows (see Fig.\ref{fig:Gamma})
: construct first the $\Gamma'_r$ by triangulating all internal faces of the $(r\times r)$-grid such that all internal vertices of the grid are of degree 6, and all non-corner external vertices of the grid are of degree 4 ($\Gamma'_r$ is unique up to isomorphism). Two of the corners of the initial grid have degree 2 in $\Gamma'_r$; let $x$ be one of them. $\Gamma_r$ obtained from $\Gamma'_r$ by adding all the edges having $x$ as an endpoint and a vertex of the external face of the grid that is not already a neighbor of $x$ as the other endpoint. Observe again that $\Gamma'_r$ is unique up to isomorphism. The following is a lemma from~\cite{kanj2017parameterized} implied from Lemma 6 in~\cite{fomin2011contraction}.

\begin{lemma}[Lemma 4.5 in~\cite{kanj2017parameterized}]\label{genus}
Let $G$ be a graph of genus $g$, and let $r$ be any positive integer. If $G$ excludes $\Gamma_r$ as an s-contraction, then the treewidth of $G$ is at most $(2r+4)\cdot(g+1)^{3/2}.$
\end{lemma}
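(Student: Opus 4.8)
The statement is a known one (Lemma~6 of~\cite{fomin2011contraction}, as phrased in~\cite{kanj2017parameterized}), so the plan is to reconstruct its proof using the tools already in hand; the cleanest route is to prove the contrapositive. Suppose $\mathrm{tw}(G) > (2r+4)\cdot(g+1)^{3/2}$; I want to exhibit $\Gamma_r$ as an s-contraction of $G$. We may assume $G$ is connected (pass to a component of largest treewidth) and fix a $2$-cell embedding of $G$ on an orientable surface $\Sigma$ of genus $g$. I would first settle the planar case $g=0$: a connected planar graph of treewidth at least $c\cdot r$, for an absolute constant $c$, contains $\Gamma_r$ as an s-contraction. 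The Excluded Grid Theorem (Theorem~\ref{theo:grid}) supplies a $t\times t$ grid minor with $t=\Theta(r)$; fixing a plane embedding of this minor model, the branch sets can be enlarged and merged so that they partition $V(G)$, which turns the grid \emph{minor} into a grid \emph{s-contraction} of a slightly smaller grid (everything outside the grid region is absorbed along a connected subgraph into the outer boundary); finally one designated corner is blown up --- by contracting a connected region that contains all remaining external vertices --- into the high-degree apex-like vertex $x$ of $\Gamma_r$ (cf.\ Fig.~\ref{fig:Gamma}). This is essentially Lemma~6 of~\cite{fomin2011contraction} restricted to the plane (one could also argue via the Radius Theorem, Theorem~\ref{theo:radius}: large treewidth forces a plane embedding of large radius, hence many concentric ``face-layers'', which contract onto the bull's-eye structure of $\Gamma_r$); the constant $2r+4$ records the exact grid side needed to produce $\Gamma_r$.

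For $g\ge 1$ I would planarize $\Sigma$ one handle at a time, using the standard dichotomy for embedded graphs: either (a) the embedding has a non-contractible cycle of ``small'' length $\ell$ (below a threshold that grows like $\Theta(\sqrt{g+1}\,r)$), or (b) its representativity exceeds that threshold. In case (b), large representativity forces a large flat planar disk inside the embedding, in which the grid-to-$\Gamma_r$ argument of the base case already applies directly. In case (a), cutting $G$ along the short non-contractible cycle yields $G_1$ of genus at most $g-1$, deleting or duplicating only $O(\ell)$ vertices and hence dropping the treewidth by at most $O(\ell)$; iterating at most $g$ times produces a planar graph $G'$ with $\mathrm{tw}(G')\ge \mathrm{tw}(G) - O(g\cdot\sqrt{g+1}\,r) \ge c\cdot(2r+4)$, whence the base case yields a $\Gamma_r$-contraction of $G'$. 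I then lift this contraction back through the cuts: each cut touched only a connected region, so its vertices can be folded into the apex branch set $x$ of $\Gamma_r$ (which is allowed to be an arbitrary connected set), and re-assembling across all handles gives $\Gamma_r$ as an s-contraction of $G$.

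The main obstacle, as this outline makes visible, is the genus reduction: one must control the drop in genus, the drop in treewidth, and the survival of enough grid-like structure \emph{simultaneously} across the $O(g)$ planarizing cuts, so that the cumulative loss is exactly of order $(g+1)^{3/2}$ rather than, say, exponential in $g$. The reason the bound is super-linear in $g$ is concentrated in case (b): converting a highly-representative (or grid-minor) structure on a genus-$g$ surface into a $\Gamma_r$ \emph{s-contraction} needs that structure to be larger by a $\Theta(\sqrt{g+1})$ factor, since the handles --- and the part of $G$ winding around them --- have to be absorbed into the apex vertex; getting this factor right, and not worse, is the quantitative heart of the lemma. By contrast, the planar base case is a known fact, and the lift-back step is bookkeeping about contraction models.
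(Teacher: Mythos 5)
First, a point of comparison: the paper does not prove this statement at all. It is imported verbatim as Lemma~4.5 of~\cite{kanj2017parameterized}, itself derived from Lemma~6 of~\cite{fomin2011contraction}, and the paper's role for it is purely as a black box feeding Lemma~\ref{treewidthGenus}. So there is no in-paper argument to measure your reconstruction against; what can be judged is whether your reconstruction would stand on its own.

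It would not, as written. Your outline correctly identifies the shape of the known proof (contrapositive; planar base case via grid minors converted to $\Gamma_r$-contractions; genus reduction by cutting along short non-contractible cycles or exploiting large representativity), but the two load-bearing steps are asserted rather than established. First, converting a $t\times t$ grid \emph{minor} into a $\Gamma_r$ \emph{s-contraction} is not mere bookkeeping: an s-contraction permits only edge contractions, no deletions, so every vertex of $G$ must be absorbed into some branch set of a partition into connected parts, and one must verify that the extra adjacencies this creates land only on edges that $\Gamma_r$ actually has (this is precisely why the target is $\Gamma_r$, with its triangulated faces and apex $x$, rather than the grid itself); you gesture at this but do not carry it out. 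Relatedly, your opening move of ``pass to a component of largest treewidth'' is invalid for the contraction relation: contracting edges never removes the other components, so a disconnected graph cannot have the connected $\Gamma_r$ as an s-contraction at all, and the reduction to the connected case needs to be handled differently. Second, and more seriously, you explicitly concede that the quantitative heart --- showing that the cumulative treewidth loss over the $O(g)$ planarizing cuts, together with the representativity threshold, yields exactly the factor $(g+1)^{3/2}$ and the constant $2r+4$ --- is ``the main obstacle'' and is left open. Since the entire content of the lemma is that specific bound, the proposal is an accurate road map to the literature rather than a proof. Given that the paper itself treats this as a cited external result, the appropriate course is to do the same, or else to reproduce the actual argument of~\cite{fomin2011contraction} in full.
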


\begin{lemma}\label{treewidthGenus}
Let $C'$ be the circuit obtained from Lemma~\ref{theo:twbounded}. It holds that $C'$ has treewidth at most $(8k + 14)\cdot(g+1)^{3/2}$, where $g$ is the genus of $C'$.
\end{lemma}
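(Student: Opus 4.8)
The plan is to combine Lemma~\ref{theo:twbounded} with Lemma~\ref{genus} by showing that the circuit $C'$ produced by the preprocessing cannot contain $\Gamma_r$ as an s-contraction once $r$ is large enough relative to $k$. Recall from Lemma~\ref{theo:twbounded} that every vertex of the underlying undirected graph of $C'$ is at distance at most $2k+1$ from $v_{out}$; hence $C'$ has radius at most $2k+1$ and diameter at most $4k+2$. The first step is to observe that the property ``every vertex is within distance $D$ of some fixed vertex'' (bounded eccentricity of $v_{out}$, hence bounded radius/diameter) is preserved under edge contractions: contracting an edge can only decrease pairwise distances. Consequently, any s-contraction $H$ of $C'$ also has the property that some vertex has eccentricity at most $2k+1$ in $H$, so $\mathrm{diam}(H)\le 4k+2$.

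The second step is a diameter lower bound for $\Gamma_r$. The $(r\times r)$-grid has diameter $2(r-1)$; triangulating the internal faces to form $\Gamma'_r$ and then adding the edges incident to the corner vertex $x$ to form $\Gamma_r$ shortens distances somewhat, but $\Gamma_r$ still contains a long shortest path. A clean bound suffices: one shows that $\mathrm{diam}(\Gamma_r)\ge r-1$ (for instance, the two corners of the grid that are \emph{not} $x$ and not adjacent to $x$ remain at distance roughly $r-1$, since the only ``shortcut'' edges all pass through $x$ and using them costs $2$ while saving at most a constant per step along one axis). Therefore, if $\Gamma_r$ were an s-contraction of $C'$, we would need $r-1\le \mathrm{diam}(\Gamma_r)\le \mathrm{diam}(C')\le 4k+2$, i.e. $r\le 4k+3$. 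Equivalently, $C'$ excludes $\Gamma_{4k+4}$ as an s-contraction.

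The final step is to plug $r=4k+4$ and the genus $g$ of $C'$ into Lemma~\ref{genus}: since $C'$ excludes $\Gamma_{4k+4}$ as an s-contraction, its treewidth is at most $(2(4k+4)+4)\cdot(g+1)^{3/2}=(8k+12)\cdot(g+1)^{3/2}$, which is within the claimed bound $(8k+14)\cdot(g+1)^{3/2}$ (the slack of $2$ absorbs the precise constant in the diameter estimate for $\Gamma_r$; a slightly weaker bound such as $\mathrm{diam}(\Gamma_r)\ge r-2$ still lands inside $(8k+14)(g+1)^{3/2}$). Note $g$ refers to the genus of $C'$, which is at most the genus of the original circuit $C$ since $C'$ is obtained from $C$ only by deleting vertices and edges, so the statement can also be read with $g$ the genus of the input.

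The main obstacle I expect is the diameter lower bound for $\Gamma_r$: one must argue carefully that the extra edges added at the corner vertex $x$ (which can create short paths of the form ``go to $x$, then jump across'') do not collapse the diameter below a linear function of $r$. The key observation making this manageable is that any path using these shortcut edges is forced to route through the single vertex $x$, and a path through $x$ between two vertices that are both ``far'' from $x$ in the triangulated grid $\Gamma'_r$ is itself long; combined with the fact that $\Gamma'_r$ (the triangulation of the grid with bounded-degree internal vertices) still has linear diameter, this gives the required bound. Everything else is bookkeeping with constants.
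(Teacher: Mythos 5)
Your overall strategy coincides with the paper's --- rule out $\Gamma_r$ as an s-contraction of $C'$ by a distance argument (contractions do not increase distances) and then apply Lemma~\ref{genus} --- but the specific metric fact you rely on, $\mathrm{diam}(\Gamma_r)\ge r-1$, is false, and the justification you sketch for it fails concretely. In $\Gamma_r$ the corner $x$ is joined to \emph{every} vertex of the external face of the grid, so all the other corners become neighbors of $x$ and any two of them are at distance $2$, not ``roughly $r-1$''. More generally, every vertex of $\Gamma_r$ lies within $\lfloor (r-1)/2\rfloor+1$ of $x$ (walk to the boundary of the grid, then take one edge to $x$), so $\mathrm{diam}(\Gamma_r)\le r+1$; and comparing, for a pair of vertices in anti-diagonal position on the ring at distance $t$ from the centre, the direct route of length $4t$ with the detour through $x$ of length $2(\lfloor (r-1)/2\rfloor-t)+2$, one sees the diameter is in fact only about $2(r+1)/3$. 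This is not a constant-size slip that your two units of slack can absorb: with the true diameter $\approx 2r/3$, the comparison $\mathrm{diam}(\Gamma_r)\le\mathrm{diam}(C')\le 4k+2$ only excludes $\Gamma_r$ for $r\gtrsim 6k$, and Lemma~\ref{genus} then yields a treewidth bound of about $(12k+O(1))\cdot(g+1)^{3/2}$, strictly weaker than the stated $(8k+14)\cdot(g+1)^{3/2}$.

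The paper sidesteps this by working with eccentricities rather than the diameter: it observes that \emph{every} vertex $u$ of $\Gamma_{4k+5}$ has some vertex at distance at least $2k+2$ from it (a radius-type lower bound of roughly $(r-1)/2$, which survives the shortcuts through $x$ much better than the diameter does), and applies this to the contracted vertex containing $v_{out}$, whose eccentricity in $C'$ is at most $2k+1$ by Lemma~\ref{theo:twbounded}. To repair your argument you should replace the diameter comparison by this eccentricity comparison; as written, the proof both asserts a false property of $\Gamma_r$ and, once that property is corrected, no longer reaches the claimed constant.
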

\begin{proof}
First, notice that for each vertex $u$ of a $\Gamma_{4k+5}$ there is another vertex $v$ such that the distance between $u$ and $v$ is at least $2k+2$.
Now, suppose that $C'$ has $\Gamma_{4k+5}$ as an s-contraction, and let $u$ be a vertex of a $\Gamma_{4k+5}$ such that $u$ is either $v_{out}$ or a vertex obtained by contracting $S$ containing $v_ {out}$. Since there is a vertex $v$ such that the distance between $u$ and $v$ is at least $2k+2$, it holds that $C'$ has a vertex at distance greater than $2k+1$ from $v_{out}$, which is a contradiction (see Lemma~\ref{theo:twbounded}). Thus, by Lemma~\ref{genus} we have that the treewidth of $C'$ is at most $(8k + 14)\cdot(g+1)^{3/2}$.
\end{proof}

\section{Dynamic programming on bounded treewidth circuits}

From Lemma~\ref{treewidthGenus}, in order to solve {\sc $k$-MinEC$^+_M$} in FPT-time on bounded genus ins\-tances, it is enough to present an FPT algorithm parameterized by the treewidth of the input. To design a dynamic programming on tree decompositions, without loss of generality, we may consider that
we are given a tree decomposition that is a rooted \emph{extended nice tree decomposition} (see~\cite{cygan2015parameterized} for details), which is defined as follow.

\begin{definition}
A tree decomposition $\mathcal{T}$ is an \emph{extended nice tree decomposition} if the following conditions are satisfied:
\begin{itemize}
 \item The root bag $X_r$ and the leaf bags are empty;
 
 \item Every non-leaf node of $\mathcal{T}$ is one of the types described below:
 \begin{itemize}
 \item Introduce vertex node -- a node $t$ with exactly one child $t'$ such that $X_t = X_{t'} \cup \{v\}$ for some $v \notin X_{t'}$, we say that $v$ is introduced in $t$;
 
 \item Introduce edge node -- a node $t$, labeled with an edge $uv \in E(G)$ such that $\{u,v\} \in X_t$, and with exactly one child $t'$ such that $X_t$ = $X_{t'}$, we say that edge $uv$ is introduced at $t$;
 
 \item Forget node -- a node $t$ with exactly one child $t'$ such that $X_t = X_{t'} \setminus \{v\}$ for some $v \in X_{t'}$, we say that $v$is forgotten at $t$;
 
 \item Join node -- a node $t$ with two children $t'$ and $t''$ such that $X_t = X_{t'} = X_{t''}$.
 \end{itemize}
\item every edge of $E(G)$ is introduced exactly once in the whole decomposition.
\end{itemize}
\end{definition}

Based on the following results, we can assume that we are given a nice tree decomposition of $G$ without loss of generality.

\begin{theorem}~\cite{bodlaender2016c}
There exists an algorithm that, given an $n$-vertex graph $G$ and an
integer $k$, runs in time $2^{O(k)}\cdot n$ and either outputs that
the treewidth of $G$ is larger than $k$ or constructs a tree
decomposition of $G$ of width at most $5k + 4$.
\end{theorem}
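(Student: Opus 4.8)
The plan is to prove this through the recursive \emph{contract--lift--compress} scheme. Instead of building a decomposition of $G$ directly, I would first produce a smaller graph $G'$ by contracting $G$ --- for instance, contract the edges of a maximal matching, or contract bounded-radius balls around low-degree vertices --- chosen so that $|V(G')|\le c\,|V(G)|$ for a fixed constant $c<1$ while every contracted set stays connected and of bounded size. Since $G'$ is a minor of $G$, we have $\mathrm{tw}(G')\le\mathrm{tw}(G)$, so it is safe to recurse on $(G',k)$: if the recursive call reports $\mathrm{tw}(G')>k$ then $\mathrm{tw}(G)>k$; otherwise it returns a tree decomposition $\mathcal{T}'$ of $G'$ of width at most $5k+4$. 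Replacing each super-vertex of $\mathcal{T}'$ by the connected set of original vertices it represents yields a tree decomposition of $G$ of width $O(k)$. The recurrence $T(n)=T(cn)+2^{O(k)}n$ telescopes by the geometric series to $2^{O(k)}n$, which is precisely the source of linearity.

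The core of the proof is then a \emph{compression} subroutine: given $G$ together with a tree decomposition of width $O(k)$, either output a tree decomposition of width at most $5k+4$ or certify $\mathrm{tw}(G)>k$. I would build the new decomposition top-down, maintaining a current piece $G_W$ of the graph with an interface $W$ of size kept at $O(k)$ (indeed at most $4k+4$ after a trimming step, so that absorbing a separator never exceeds $5k+5$). At each step I search for a set $S$ with $|S|\le k+1$ that is simultaneously balanced for $W$ (every part of $G_W-S$ meets $W$ in at most $|W|/2$ vertices) and balanced for the vertex count of $G_W$ (every part has at most $\tfrac{2}{3}|V(G_W)|$ vertices, say). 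The standard fact that a graph of treewidth at most $k$ admits, for any vertex weighting, a balanced separator of size at most $k+1$ guarantees such an $S$ exists whenever $\mathrm{tw}(G)\le k$; and since a width-$O(k)$ decomposition of $G$ is in hand, such an $S$ --- if it exists --- is found by a $2^{O(k)}\cdot|V(G_W)|$ dynamic program over that decomposition. Failure to find one licenses the answer $\mathrm{tw}(G)>k$. Otherwise we create the bag $W\cup S$, recurse on each child piece (with interface $(W\cap\text{part})\cup S$, trimmed back to $\le 4k+4$), and attach the recursive decompositions beneath the new bag. Correctness is immediate from the construction: every vertex appears, and appears along a connected subtree because it stays inside ``its'' piece until absorbed into a separator bag; every edge of $G$ is covered because its two endpoints remain together in a common piece until a bag containing one of them is formed.

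For the running time of the compression step, vertex-balance forces the recursion tree to have depth $O(\log|V(G)|)$, and since the pieces at a fixed level overlap only on interfaces of size $O(k)$, the total dynamic-programming cost per level is $2^{O(k)}\cdot|V(G)|$; a more careful amortization --- charging work to vertices as they are finalized into bags, and using that piece sizes shrink geometrically down the recursion --- removes even the $\log$ factor, yielding $2^{O(k)}\cdot|V(G)|$ overall, and hence $2^{O(k)}\cdot n$ for the full algorithm.

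I expect the main obstacle to be twofold. First, pinning the width down to exactly $5k+4$ rather than an unspecified $O(k)$: this demands a tight discipline on the interface size and on how the trimming step shrinks an interface after a split while preserving the required adhesions, so that $W\cup S$ never exceeds $5k+5$ vertices. Second, and more delicate, is the linear-time bound --- the plain recursion is quadratic in the style of Robertson--Seymour, and one must marry the contraction preprocessing with the amortized analysis above, verifying that each contraction round removes a constant fraction of the vertices while inflating bags only by a constant factor, and that the separator dynamic program is never repeated on the same vertices too often. The remaining ingredients (existence of balanced separators in bounded-treewidth graphs, the dynamic program that finds them, and minor-monotonicity of treewidth) are routine.
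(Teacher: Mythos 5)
This statement is not proved in the paper at all: it is imported verbatim, with a citation, from Bodlaender, Drange, Dregi, Fomin, Lokshtanov and Pilipczuk's ``A $c^k n$ 5-approximation algorithm for treewidth,'' and the authors use it as a black box. So there is no in-paper proof to compare against; the only fair comparison is with the cited source, and your outline does in fact track its strategy --- shrink the graph by contractions, recurse, lift, and then run a compression routine that builds bags of the form $W\cup S$ from balanced separators of size $k+1$ found by dynamic programming over the approximate decomposition, with the interface trimmed to $4k+4$ so that bags have size at most $5k+5$, i.e.\ width $5k+4$. That is the right architecture, and your accounting of where the width bound and the linear running time come from is essentially correct.

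What you have, though, is a proof plan rather than a proof, and the two places you yourself flag as obstacles are exactly the places where the real work lives. First, the claim that one can always find a contraction $G'$ with $|V(G')|\le c|V(G)|$ for a fixed $c<1$ is not automatic: a maximal matching in a sparse graph can be far smaller than $\Omega(n)$, and the argument in the literature needs a dichotomy of the form ``either $G$ has a matching covering a constant fraction of the vertices, or (because $\mathrm{tw}(G)\le k$ forces low degeneracy) it has many low-degree/simplicial-like vertices that can be removed and reinserted,'' together with a check that lifting through such a round only doubles the width. Without that dichotomy the recursion need not bottom out in $O(\log n)$ rounds of geometric shrinkage. Second, the trimming step that keeps the interface at $4k+4$ while preserving the adhesion conditions, and the amortization that removes the $\log$ factor from the compression phase, are each a page or more of careful bookkeeping in the original. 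So your proposal should be read as a faithful summary of the cited theorem's proof idea, not as a self-contained derivation; for the purposes of this paper, citing the result (as the authors do) is the appropriate move.
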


\begin{lemma}~\cite{cygan2015parameterized}
Given a tree decomposition $(T, \{X_t\}_{t\in V(T)})$ of $G$ of width
at most $k$, one can in time $O(k^2\cdot \max(|V(T)|,|V(G)|))$ compute
a nice tree decomposition of $G$ with at most
$O(k\cdot|V(G)|)$ nodes and width at most $k$.
\end{lemma}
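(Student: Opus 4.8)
The plan is to build the nice tree decomposition from the given one by a bounded sequence of purely local surgeries, each of which preserves the three tree‑decomposition axioms and never enlarges any bag (hence never increases the width), while keeping a running tally of how many new nodes each surgery creates and how much time it costs. Since no step ever inserts a vertex into a bag, width‑$\le k$ is automatic throughout, so the real work is (i) correctness of the invariants after each surgery and (ii) the arithmetic for the node count and the time bound.

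First I would normalize the input: by repeatedly contracting a tree edge $tt'$ with $X_t\subseteq X_{t'}$ (folding $t$ into the endpoint with the larger bag), any tree decomposition is reduced in linear time to one with at most $|V(G)|$ nodes, which is what turns the final node count into $O(k\cdot|V(G)|)$ rather than something depending on the original $|V(T)|$ (and explains the $\max(|V(T)|,|V(G)|)$ in the time bound — reading and shrinking the input already costs $\Omega(|V(T)|)$). Then root $T$ at an arbitrary node. Next I would binarize: as long as some node $t$ has $d\ge 3$ children, create a fresh node $t^{\star}$ with $X_{t^{\star}}=X_t$, move all but one child‑subtree under $t^{\star}$, and make $t^{\star}$ the second child of $t$; this terminates with every node having at most two children and adds at most $\sum_t\max(0,d_t-2)=O(|V(T)|)$ nodes. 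Finally, for every node $t$ that now has exactly two children $t',t''$, insert a copy of the bag $X_t$ directly above each of $t'$ and $t''$ so that $t$ becomes a genuine join node with $X_t=X_{t'}=X_{t''}$.

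After the branching structure is fixed I would install the introduce/forget chains. For every remaining tree edge $tt'$, with $t$ the parent and $X_t\ne X_{t'}$, replace the edge by a path that first forgets the vertices of $X_{t'}\setminus X_t$ one at a time and then introduces the vertices of $X_t\setminus X_{t'}$ one at a time; each such chain has $O(k)$ internal nodes. To finish the structural part, append below every leaf a forget chain that reduces its bag to $\emptyset$, and above the root a forget chain that reduces $X_r$ to $\emptyset$. Lastly, handle the edges of $G$: for each $uv\in E(G)$ choose a node whose bag contains both $u$ and $v$ (one exists by the edge axiom) and splice in exactly one introduce‑edge node labeled $uv$ there; since a graph of treewidth at most $k$ has $O(k\cdot|V(G)|)$ edges, this adds $O(k\cdot|V(G)|)$ nodes. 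The correctness argument is then the verification that each surgery keeps $\bigcup_t X_t=V(G)$, keeps every edge of $G$ covered by a bag, keeps each set $T_v=\{t:v\in X_t\}$ connected, and results in a tree all of whose non‑leaf nodes are of one of the four allowed types with every edge of $G$ introduced exactly once.

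For the bounds, the $O(|V(G)|)$ normalized nodes, the $O(|V(G)|)$ binarization and join‑equalization nodes, the $O(k)$ nodes per original tree edge (hence $O(k\cdot|V(G)|)$ in total) from the introduce/forget chains, the $O(k)$ nodes used for leaf and root cleanup per leaf, and the $O(k\cdot|V(G)|)$ introduce‑edge nodes sum to $O(k\cdot|V(G)|)$; and since each individual surgery only copies or scans a bag of size $O(k)$ and the number of surgeries is $O(k\cdot\max(|V(T)|,|V(G)|))$, the total running time is $O(k^2\cdot\max(|V(T)|,|V(G)|))$. The step I expect to be the main obstacle is not conceptual but the invariant bookkeeping: checking that duplicating a bag above a join node and that threading a forget‑then‑introduce chain along a tree edge never disconnect some $T_v$, and pinning down a fixed order of operations — normalize, root, binarize, make join nodes, insert chains, do leaf/root cleanup, then introduce‑edge nodes last — so that no later step reintroduces a defect an earlier step removed. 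This is routine but it is the bulk of a careful proof.
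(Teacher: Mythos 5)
The paper gives no proof of this lemma; it is quoted verbatim from the cited textbook of Cygan et al., and your argument is a correct reconstruction of the standard proof given there (contract subset bags, root, binarize, equalize join children, thread forget/introduce chains along edges, empty the root and leaves). The only cosmetic deviation is that you also splice in introduce-edge nodes, which belong to the \emph{extended} nice decomposition rather than the plain one asserted here, but since a treewidth-$k$ graph has $O(k\cdot|V(G)|)$ edges this stays within the claimed node and time bounds.
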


Now, we are ready to use a nice tree decomposition to obtain
an FPT-time algorithm for {\sc MinEC$^+_M$} parameterized by $tw(G)$, with single exponential dependency on $tw(G)$ and linear with respect to $n$.

\begin{theorem}\label{theo:dp}
{\sc MinEC$^+_M$} can be solved in time $2^{O(tw)} \cdot n$, where $tw$ is the treewidth of the underlying undirected graph of the input.
\end{theorem}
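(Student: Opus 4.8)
The plan is to design a dynamic programming algorithm over an extended nice tree decomposition $\mathcal{T}=(T,\{X_t\})$ of width $tw$ of the underlying undirected graph of the circuit $C$. The central difficulty, which is exactly the ``anomalous behavior'' pointed out in the preliminaries (Fig.~\ref{fig:anomaly}), is that the energy of an assignment counts \emph{all} activated gates, even those whose output never reaches $v_{out}$; so, unlike a succinct-certificate problem, we cannot simply track a connected map of useful edges. Instead, each partial assignment is determined globally, and the state we carry up the tree must record enough information to (i) reconstruct the truth value of each boundary vertex, (ii) verify local consistency of the gates (an \texttt{AND}-gate outputs \texttt{true} iff all in-neighbors do, an \texttt{OR}-gate iff some in-neighbor does), and (iii) keep a running count of activated gates. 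Concretely, for a node $t$ I would index the DP table by a function $s\colon X_t \to \{0,1\}$ giving the intended truth value of each vertex in the bag, together with, for every vertex $v\in X_t$, one extra flag: for an \texttt{AND}-gate with $s(v)=1$, whether all of its \emph{already-introduced} in-edges have delivered \texttt{true}; for an \texttt{OR}-gate with $s(v)=1$, whether \emph{some} already-introduced in-edge has delivered \texttt{true}; (for $s(v)=0$ the dual obligations). The value stored is the minimum number of gates outputting \texttt{true} in the subgraph $G_t$ processed so far, consistent with that state, or $+\infty$ if impossible. Since $|X_t|\le tw+1$ and each vertex carries $O(1)$ bits beyond its truth value, the table has $2^{O(tw)}$ entries per node.

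Next I would spell out the transitions for the four node types. \textbf{Introduce vertex} $v$: extend $s$ by a guessed value $s(v)\in\{0,1\}$; if $v$ is a gate, initialize its flag to the ``empty'' state (for an \texttt{AND}-gate the conjunction over the empty in-edge set is \texttt{true}; for an \texttt{OR}-gate the disjunction is \texttt{false}); add $1$ to the count iff $v$ is a gate (i.e.\ $v\notin I$) with $s(v)=1$; input vertices contribute $0$, matching the fact that $EC$ ignores the firing of inputs. \textbf{Introduce edge} $uv$ (directed $u\to v$ in $C$): keep only states where the edge is consistent with $s(u)$ and update $v$'s flag accordingly --- for an \texttt{AND}-gate $v$, the flag stays \texttt{true} only if $s(u)=1$; for an \texttt{OR}-gate $v$, the flag becomes \texttt{true} if $s(u)=1$; also enforce immediately that if $s(v)=0$ then no in-edge may deliver \texttt{true} to an \texttt{OR}-gate and at least the current partial evidence still allows some in-neighbor to be \texttt{false} for an \texttt{AND}-gate. \textbf{Forget} $v$: when $v$ leaves the bag all of $v$'s in-edges and out-edges have been introduced, so its obligation is final; keep the state only if $v$'s flag is consistent with $s(v)$ (an \texttt{AND}-gate with $s(v)=1$ must have flag \texttt{true}; with $s(v)=0$ must have flag \texttt{false}; dually for \texttt{OR}; for $v=v_{out}$ we additionally require $s(v)=1$), then project $s$ and the flags onto $X_t$ and take the minimum over the forgotten value. \textbf{Join} $t$ with children $t',t''$: combine only pairs of states that agree on $s\!\restriction\!X_t$; a vertex's flag in the merged state is the conjunction of the two children's flags for \texttt{AND}-gates and the disjunction for \texttt{OR}-gates (since the in-edge sets processed in the two branches are disjoint and their union is what has been processed at $t$); the count is the sum of the two counts minus the number of bag gates counted twice, i.e.\ minus $|\{v\in X_t : v\notin I,\ s(v)=1\}|$. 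At the root, $X_r=\emptyset$, so there is a single state, and the algorithm answers \texttt{yes} iff its stored value is at most $k$.

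Having set up the recurrence, I would argue correctness in the usual two directions: by induction on $T$, every entry is $\le$ the true minimum over partial assignments realizing that state (each legal assignment projects to a consistent state at every node), and conversely every finite entry is witnessed by an actual partial assignment of $G_t$ (reading the guessed $s$ values along the decomposition yields a global assignment, and the flag bookkeeping guarantees that at the moment each gate is forgotten its Boolean semantics is satisfied, hence the assembled assignment is a genuine satisfying assignment of $C$ with exactly the counted energy). The running time is $2^{O(tw)}\cdot n$: by the cited results we first compute a nice tree decomposition of width $O(tw)$ with $O(tw\cdot n)$ nodes in time $2^{O(tw)}\cdot n$, and each of the four transitions is processed in $2^{O(tw)}$ time per node --- the join node being the most expensive, as it ranges over pairs of states, but still $2^{O(tw)}$ because matched states must coincide on $s\!\restriction\!X_t$. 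Combining this with Lemma~\ref{treewidthGenus} (bounded genus plus the preprocessing of Lemma~\ref{theo:twbounded} gives treewidth bounded by a function of $k$), we obtain the promised FPT algorithm for bounded-genus circuits parameterized by $k$.

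The step I expect to be the main obstacle is pinning down the join transition and the exact semantics of the per-gate flag so that the ``leaked'' gates --- activated gates whose signal is irrelevant to $v_{out}$ --- are counted correctly without being required to be useful. The subtlety is that a gate is counted as soon as $s(v)=1$ is guessed (on introduction), with no later chance to ``uncount'' it; so the consistency checks at forget nodes, together with the monotone conjunction/disjunction merge at joins, must be exactly strong enough to reject impossible states yet permissive enough to allow any assignment, useful-signal or not. Getting these local rules provably equivalent to global Boolean evaluation --- in particular handling \texttt{OR}-gates, where a single in-edge suffices and so the ``some in-neighbor \texttt{true}'' flag must be an OR across branches, versus \texttt{AND}-gates where it is an AND --- is where the proof requires care.
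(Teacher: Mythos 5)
Your proposal is correct and follows essentially the same approach as the paper: a dynamic program over an extended nice tree decomposition whose state records, for each bag vertex, its intended truth value plus a flag tracking whether its (dis)activation obligation has already been witnessed by an introduced in-edge — your truth function $s$ and per-gate flags are exactly the paper's $S$, $\mathcal{B}^{\tt OR}$, $\mathcal{B}^{\tt AND}$ in complementary encoding, and your transitions (including the $|\{v\in X_t: v\notin I,\ s(v)=1\}|$ correction at join nodes and the deferral of the $v_{out}$ requirement) match the paper's recurrences. The "dual obligations" for $s(v)=0$ that you flag as the delicate point are precisely the paper's maximality property ensuring the guessed activation set is exactly the one induced by the assignment, so leaked gates are counted; your handling of it is sound.
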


\begin{proof}
Let $C = (I,G,v_{out})$ be a monotone circuit where $I$ is the set of inputs of $C$, $G$ is the set of gates with out-degree greater than $0$ and $v_{out}$ is a single output vertex. Let $\mathcal{T} = (T,\{X_t\}_{t \in V(T)})$ be a rooted extended nice tree decomposition of $C$. Consider also $T_{t}$ as the subtree of $T$ rooted by node $t$ (bag $X_t$) and $C_{t}$ be the graph/circuit having $\mathcal{T}_{t} = (T_{t},\{X_i\}_{i \in V(T_{t})})$ as tree decomposition.
For convenience, we add the vertex $v_{out}$ to every bag of $T$; thus, the width of $\mathcal{T}$ is increased by at most one. 

Now, note that an assignment $X$ satisfies a monotone circuit $C$ if and only if it induces an activation set $\mathcal{S}_X$ such that:
\begin{enumerate}
    \item $v_{out}\in \mathcal{S}_X$;
    \item for each $v\in \mathcal{S}_X$ holds that:
    \begin{itemize}
        \item if $f(v)=${\tt AND} then every in-neighbor of $v$ is in $\mathcal{S}_X$;
        \item if $f(v)=${\tt OR} then at least one among the in-neighbors of $v$ is in $\mathcal{S}_X$;
    \end{itemize}
    \item for each $v\notin \mathcal{S}_X$ holds that:
    \begin{itemize}    
        \item if $f(v)=${\tt AND} then at least one among the in-neighbor of $v$ is not in $\mathcal{S}_X$;
        \item if $f(v)=${\tt OR} then every in-neighbor of $v$ is not in $\mathcal{S}_X$;
    \end{itemize}
\end{enumerate}

Properties 1 and 2 describe the necessary and sufficient conditions for a set $\mathcal{S}_X$ of activated gates to certify a satisfying assignment. Property 3 ensures that $\mathcal{S}_X$ is maximal regarding the property of having been activated by $X$.

Therefore, the problem of finding a satisfying assignment $X$ which minimizes $EC(C,X)$ can be seen as the problem of finding a satisfying assignment $X$ which minimizes $|\mathcal{S}_X \setminus I|$.
Thus, we define $c[t,S,\mathcal{B}^{\tt OR},\mathcal{B}^{\tt AND}]$ as 
the cardinality of a minimum set of gates $\mathcal{S}_t$ (if any) 
of $C_{t}$ such that:
\begin{itemize}
    \item $v_{out}\in S$ and $S= X_t\cap \mathcal{S}_t$;
    (we say that $X_t \setminus S = \overline{S}$)
    
    \item for each $v\in V(C_t)\setminus X_t$ properties 2 and 3 holds with respect to $\mathcal{S}_t$;
    
    \item for each $v\in S$ such that $f(v)=${\tt AND}, all in-neighbors of $v$ in $C_t$ are in $\mathcal{S}_t$;
    
    \item The set $\mathcal{B}^{\tt OR}$ is the subset of {\tt OR}-gates in $S$ already having in-neighbors in $\mathcal{S}_t$;
    
    \item for each $v\in \overline{S}$ such that $f(v)=${\tt OR}, all in-neighbors of $v$ in $C_t$ are not in $\mathcal{S}_t$;
    
    \item The set $\mathcal{B}^{\tt AND}$ is the subset of {\tt AND}-gates in $\overline{S}$ already having in-neighbors that are not in $\mathcal{S}_t$;
\end{itemize}


Furthermore, the optimal solution of the main problem can be found either at $c[r,\{v_{out}\},\{v_{out}\},\{\}]$, if $f(v_{out}) = {\tt OR}$, or at $c[r,\{v_{out}\},\{\},\{\}]$, if $f(v_{out}) = {\tt AND}$, where $r$ is the root of the tree decomposition ${\mathcal T}$.
Recall that, for any node $t$ we assume that $v_{out} \in S$. 


In order to solve {\sc MinEC}$^{+}_{M}$, the counting of gates that output {\tt true} (in the solution) are made in introduce vertex nodes. 
Note that in the introduce node of an {\tt OR}-vertex $v$, it can not be simultaneously in $S$ and $\mathcal{B}^{\tt OR}$ 
because when a vertex is introduced then it is isolated in $C_{t}$ (we are considering an extended nice tree decomposition, i.e. the edges are introduced in introduce edge nodes). 

\paragraph*{Leaf node}

Let $t$ be a leaf node, then $X_t = \{v_{out}\}$. Since $v_{out}$ must be in $S$, then $\mathcal{B}^{\tt AND}=\emptyset$. 
Thus, we have three subproblems in Equation~(\ref{eq:leaf}).

\begin{equation}\label{eq:leaf}
c[t,\{v_{out}\},\mathcal{B}^{\tt OR},\mathcal{B}^{\tt AND}] = \left\{
\begin{array}{ll}
1, & \textrm{if  $f(v_{out})$ = {\tt AND}}\\
1, & \textrm{if } f(v_{out}) = {\tt OR} \textrm{ and }v_{out} \notin \mathcal{B}^{\tt OR}\\
\infty , & \textrm{if } f(v_{out}) = {\tt OR} \textrm{ and } v_{out} \in \mathcal{B}^{\tt OR}
\end{array}
\right.
\end{equation}

\paragraph*{Introduce vertex node}

Let $t$ be an introduce vertex node with exactly one child $t'$ such that $X_t = X_{t'} \cup \{v\}$. 
In the graph $C_{t}$, $v$ is an isolated vertex; consequently, as in the leaf nodes, there is infeasibility whenever $v$ belongs to $\mathcal{B}^{\tt OR}$ or $\mathcal{B}^{\tt AND}$. 
Besides, we have the possibility of $v$ be an input vertex ($f(v) \notin \{{\tt AND},{\tt OR}\}$) or  $v \notin S$, such situations only rescue previous subproblems without increment the current subsolution. On the other hand, we increment the subsolution by $1$ whenever $v \in S$. All possibilities are covered in Equations~(\ref{eq:IntroVertex01}),~(\ref{eq:IntroVertex02})~and~(\ref{eq:IntroVertex03}).

\begin{itemize}
    \item If $f(v) \notin \{{\tt AND},{\tt OR}\}$ then
\end{itemize}

\begin{equation}\label{eq:IntroVertex01}
    c[t,S,\mathcal{B}^{\tt OR},\mathcal{B}^{\tt AND}] = c[t',S \setminus \{v\},\mathcal{B}^{\tt OR},\mathcal{B}^{\tt AND}]
\end{equation}

\begin{itemize}
    \item If $f(v) = {\tt OR}$ then
\end{itemize}

\begin{equation}\label{eq:IntroVertex02}
    c[t,S,\mathcal{B}^{\tt OR},\mathcal{B}^{\tt AND}] = \left\{
    \begin{array}{ll}
        c[t',S,\mathcal{B}^{\tt OR},\mathcal{B}^{\tt AND}], & \textrm{if } v \notin S\\
        c[t',S \setminus \{v\},\mathcal{B}^{\tt OR},\mathcal{B}^{\tt AND}] + 1, & \textrm{if } v \in S \textrm{ and } v \notin \mathcal{B}^{\tt OR}\\
        \infty , & \textrm{if } v \in S \textrm{ and } v \in \mathcal{B}^{\tt OR}\\
    \end{array}
\right.
\end{equation}

\begin{itemize}
    \item If $f(v) = {\tt AND}$ then
\end{itemize}

\begin{equation}\label{eq:IntroVertex03}
    c[t,S,\mathcal{B}^{\tt OR},\mathcal{B}^{\tt AND}] = \left\{
    \begin{array}{ll}
        c[t',S \setminus \{v\},\mathcal{B}^{\tt OR},\mathcal{B}^{\tt AND}] + 1, & \textrm{if } v \in S\\
        c[t',S,\mathcal{B}^{\tt OR},\mathcal{B}^{\tt AND}], & \textrm{if } v \notin S \textrm{ and } v \notin \mathcal{B}^{\tt AND}\\
        \infty , & \textrm{if } v \notin S \textrm{ and } v \in \mathcal{B}^{\tt AND}\\
    \end{array}
\right.
\end{equation}

\paragraph*{Introduce edge node}

Let $t$ be an introduce edge node and $t'$ its child such that $X_t = X_{t'}$, which introduces the directed edge $uv$ such that $\{u,v\} \subseteq X_t$. Now, by including an edge, we can evaluate each subproblem concerning the sets $\mathcal{B}^{\tt OR}$ and $\mathcal{B}^{\tt AND}$; so, for each {\tt OR}-gate $ v \in S$, at least one in-neighbor also must be in $S$; 
so, either $uv$ attend this demand or another already introduced edge satisfied that.
We apply the same reasoning for {\tt AND}-gates: considering an {\tt AND}-gate $v \in \overline{S}$, then at least one in-edge of $v$ need comes to another vertex in $\overline{S}$; if $uv$ do not attend this requirement, the current subproblem is assigned to a previous subproblem where $v \in \mathcal{B}^{\tt AND}$.
All these conditions are handled in Equations~(\ref{eq:IntroEdge01})~and~(\ref{eq:IntroEdge02}). 
%
%
%
%
 Recall that we are introducing the directed edge $uv$.

\begin{itemize}   
\item If $f(v) = {\tt OR}$ then $c[t,S,\mathcal{B}^{\tt OR},\mathcal{B}^{\tt AND}]$ is equal to 
\end{itemize}
\begin{equation}\label{eq:IntroEdge01}
    \left\{
    \begin{array}{ll}
     c[t',S,\mathcal{B}^{\tt OR},\mathcal{B}^{\tt AND}], & \textrm{if } u \notin S\\
     \infty, & \textrm{if } u \in S \textrm{ and } v \notin S\cap \mathcal{B}^{\tt OR}\\
    \min \left\{
    c[t',S,\mathcal{B}^{\tt OR},\mathcal{B}^{\tt AND}], c[t',S,\mathcal{B}^{\tt OR} \setminus \{v\},\mathcal{B}^{\tt AND}] \right\}, & \textrm{if } u \in S \textrm{ and } v \in S\cap \mathcal{B}^{\tt OR}
    \end{array}
\right.
\end{equation}

\begin{itemize}
\item If $f(v) = {\tt AND}$ then $c[t,S,\mathcal{B}^{\tt OR},\mathcal{B}^{\tt AND}]$ is equal to
\end{itemize}
\begin{equation}\label{eq:IntroEdge02}
    \left\{
    \begin{array}{ll}
     c[t',S,\mathcal{B}^{\tt OR},\mathcal{B}^{\tt AND}], & \textrm{if } u \in S\\
     \infty , & \textrm{if } u \notin S \textrm{ and } v \in S\\
     \infty , & \textrm{if } \{u,v\} \subseteq \overline{S} \textrm{ and } v \notin \mathcal{B}^{\tt AND}\\
    \min \left\{ c[t',S,\mathcal{B}^{\tt OR},\mathcal{B}^{\tt AND}], c[t',S,\mathcal{B}^{\tt OR} ,\mathcal{B}^{\tt AND} \setminus \{v\}] \right\}, & \textrm{if } \{u,v\} \subseteq \overline{S} \textrm{ and } v \in \mathcal{B}^{\tt AND}
    \end{array}
\right.
\end{equation}

\paragraph*{Forget node}

Let $t$ be a forget node and $t'$ be its child such that $X_t = X_{t'} \backslash {v}$. In this case, we verify the best among either selecting or not $v$ in current subproblem. If $v$ is an input vertex, then this verification is trivial (it is enough to rescue the minimum subsolution varying only the membership of $v$ in $S$). For {\tt OR}-gates and {\tt AND}-gates, the same verification are made but considering the feasibility of $v$ through its membership in $\mathcal{B}^{\tt OR}$ and $\mathcal{B}^{\tt AND}$. Equations~(\ref{eq:forget01}),~(\ref{eq:forget02})~and~(\ref{eq:forget03}) summarize these three scenarios.

\begin{itemize}
\item If $f(v) \neq \{{\tt AND,OR}\}$ then
\end{itemize}
\begin{equation}\label{eq:forget01}
c[t,S,\mathcal{B}^{\tt OR},\mathcal{B}^{\tt AND}] = 
\min \left\{ c[t',S,\mathcal{B}^{\tt OR},\mathcal{B}^{\tt AND}], c[t',S \cup \{v\},\mathcal{B}^{\tt OR},\mathcal{B}^{\tt AND}] \right\}\\
\end{equation}


\begin{itemize}
\item If $f(v) = {\tt OR}$ then
\end{itemize}
\begin{equation}\label{eq:forget02}
c[t,S,\mathcal{B}^{\tt OR},\mathcal{B}^{\tt AND}] = 
\min \left\{ c[t',S,\mathcal{B}^{\tt OR},\mathcal{B}^{\tt AND}], c[t',S \cup \{v\},\mathcal{B}^{\tt OR} \cup \{v\},\mathcal{B}^{\tt AND}] \right\}\\
\end{equation}


\begin{itemize}
\item If $f(v) = {\tt AND}$ then
\end{itemize}
\begin{equation}\label{eq:forget03}
c[t,S,\mathcal{B}^{\tt OR},\mathcal{B}^{\tt AND}] = 
\min \left\{ c[t',S,\mathcal{B}^{\tt OR},\mathcal{B}^{\tt AND}  \cup \{v\}], c[t',S \cup \{v\},\mathcal{B}^{\tt OR},\mathcal{B}^{\tt AND}] \right\}\\
\end{equation}


\paragraph*{Join node}

Let $t$ be a join node with two children $t_1$ and $t_2$. For tabulation of the join nodes, we need to combine two partial solutions -- one originating from $C_{{t_1}}$ and another from $C_{{t_2}}$ -- in such a way that the merging is a feasible solution. 
Recall that $G$ is acyclic so we don't need to care about cycles.
Also, if a gate is activated in $C_t$ it must be activated in both children, so we must subtract duplicity. However, since each edge of $C_{t}$ is in either $C_{{t_1}}$ or ${C_{t_2}}$, the feasibility of merging children's solutions is guaranteed assuming that whether $v\in \mathcal{B}^{\tt OR}/\mathcal{B}^{\tt AND}$ then it is also in the respective set of one of the children, as described in Equation~\ref{eq:join}.
%
%

%

\begin{equation}\label{eq:join}
c[t,S,\mathcal{B}^{\tt OR},\mathcal{B}^{\tt AND}] = 
\min_{\mathcal{B}^{\tt OR}_1,\mathcal{B}^{\tt AND}_1,\mathcal{B}^{\tt OR}_2,\mathcal{B}^{\tt AND}_2}
\left\{ c[t_1,S,\mathcal{B}^{\tt OR}_1,\mathcal{B}^{\tt AND}_1] + c[t_2,S,\mathcal{B}^{\tt OR}_2,\mathcal{B}^{\tt AND}_2] 
\right\} - |S \setminus I| \\
\end{equation}

where $\mathcal{B}^{\tt OR}=\mathcal{B}^{\tt OR}_1\cup \mathcal{B}^{\tt OR}_2$ and $\mathcal{B}^{\tt AND}=\mathcal{B}^{\tt AND}_1\cup \mathcal{B}^{\tt AND}_2$.


Every bag of $\mathcal{T}$ has at most $tw + 2$ vertices (including $v_{out}$) and $v_{out}$ is fixed in the solution, thus each bag has at most $2^{tw+1}$ possible subsets $S$, there are at most $2^{tw+2}$ possible sets $\mathcal{B}^{\tt OR}$, and there are at most $2^{tw+1}$ sets $\mathcal{B}^{\tt AND}$. Therefore, the entire matrix has size $2^{O(tw)}\cdot n$. As each entry of the table can be computed in $2^{O(tw)}$ time, it holds that the algorithm performs in time $2^{O(tw)}\cdot n$.
\end{proof}

From Theorem~\ref{treewidthGenus} and Theorem~\ref{theo:dp}, it follows that Corollary~\ref{corollary:genus} holds.

\begin{corollary}
\label{corollary:genus}
{\sc MinEC$^+_M$} can be solved in time $2^{O(k\cdot (g+1)^{3/2})} \cdot n^{O(1)}$, where $g$ is the genus of the input.
\end{corollary}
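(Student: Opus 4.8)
The plan is to combine the two structural results already proved in the paper essentially verbatim. Corollary~\ref{corollary:genus} asserts a running time of $2^{O(k\cdot(g+1)^{3/2})}\cdot n^{O(1)}$ for {\sc MinEC$^+_M$} on instances of genus $g$, and every ingredient needed is in place: Lemma~\ref{theo:twbounded} preprocesses an arbitrary instance in polynomial time, Lemma~\ref{treewidthGenus} bounds the treewidth of the reduced circuit $C'$ by $(8k+14)\cdot(g+1)^{3/2}$, and Theorem~\ref{theo:dp} solves {\sc MinEC$^+_M$} in time $2^{O(tw)}\cdot n$. So the proof is just a pipeline argument gluing these together.

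Concretely, I would proceed as follows. First, apply the polynomial-time algorithm of Lemma~\ref{theo:twbounded} to the input instance $(C,k)$. Either it already decides the instance (in which case we are done in polynomial time), or it outputs an equivalent instance $(C',k)$ whose underlying undirected graph has every vertex at distance at most $2k+1$ from $v_{out}$. Second, observe that $C'$ is a subgraph (indeed a vertex/edge-deletion) of $C$, so its genus is at most $g$, the genus of the input. Third, invoke Lemma~\ref{treewidthGenus}: the treewidth of $C'$ is at most $(8k+14)\cdot(g+1)^{3/2} = O\!\left(k\cdot(g+1)^{3/2}\right)$. Fourth, run the dynamic-programming algorithm of Theorem~\ref{theo:dp} on $C'$; since a tree decomposition of width $O(tw)$ can be computed in FPT time (via the cited Bodlaender et al.\ result) and the DP runs in $2^{O(tw)}\cdot n$, the total running time on $C'$ is $2^{O(k\cdot(g+1)^{3/2})}\cdot n$. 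Since $(C',k)$ is equivalent to $(C,k)$, this decides the original instance, and adding the polynomial cost of the preprocessing gives the claimed bound $2^{O(k\cdot(g+1)^{3/2})}\cdot n^{O(1)}$.

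There is essentially no hard part: the corollary is a bookkeeping composition of prior lemmas. The only point warranting a sentence of care is that the genus of $C'$ does not exceed the genus of $C$ — this holds because Lemma~\ref{theo:twbounded}'s reduction rules only delete vertices and edges, and genus is minor-monotone, hence subgraph-monotone. A second minor point is making sure the treewidth bound from Lemma~\ref{treewidthGenus} is stated in terms of the genus $g$ of the \emph{input} rather than of $C'$; since $\mathrm{genus}(C')\le \mathrm{genus}(C)=g$ and the bound $(8k+14)\cdot(g'+1)^{3/2}$ is monotone in $g'$, replacing $g'$ by $g$ only weakens it, which is fine. Everything else is a direct substitution of the stated running times.
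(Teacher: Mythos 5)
Your proposal is correct and follows exactly the route the paper intends: the corollary is stated as an immediate consequence of Lemma~\ref{treewidthGenus} and Theorem~\ref{theo:dp}, preceded by the preprocessing of Lemma~\ref{theo:twbounded}, which is precisely your pipeline. Your extra remark that the reduction rules only delete vertices and edges (so the genus of $C'$ is at most that of $C$) is a small point the paper leaves implicit, and it is handled correctly.
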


Finally, since any satisfying assignment $X$ produces a partition of the vertices of $C$ according to the properties described in Theorem~\ref{theo:dp}, to find $X$ which minimizes $|\mathcal{S}_X \setminus I|$ is LinEMSOL$_1$-expressible (see~\cite{courcelle2012graph} for LinEMSOL details). Thus, the following holds.

\begin{theorem}
{\sc MinEC$^+_M$} is fixed-parameter tractable when parameterized by the clique-width of the input.
\end{theorem}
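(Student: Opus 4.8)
The plan is to invoke Courcelle-type meta-theorems for graphs of bounded clique-width, specifically the optimization version LinEMSOL$_1$ due to Courcelle, Makowsky, and Rotics~\cite{courcelle2012graph}. Recall that LinEMSOL$_1$ allows us to express the search for a tuple of vertex sets satisfying a fixed MSO$_1$ formula (quantification over vertices and vertex sets, with the adjacency relation, but \emph{not} over edges or edge sets) while simultaneously optimizing a linear function of the sizes of those sets weighted by integer vertex labels; such problems are solvable in FPT time parameterized by the clique-width of the input graph, provided a clique-width expression is given (and one can be computed in FPT time by the approximation algorithm of Oum and Seymour). Thus it suffices to (i) encode the input circuit $C$ as a labeled graph whose clique-width is within a constant additive/multiplicative factor of the clique-width of the underlying undirected graph of $C$, and (ii) write down an MSO$_1$ formula, together with a linear objective, that captures exactly the notion ``$\mathcal{S}$ is the activation set of some satisfying assignment'' and whose optimum equals $MinEC(C)$.

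For step (i), I would take the vertex-labeled graph on $V(C)$ where the labels record the gate type (one label each for \texttt{AND}, \texttt{OR}, and \texttt{input}) together with a distinguished label marking $v_{out}$, and I would additionally need edge orientation information. Since MSO$_1$ cannot quantify over edges, the standard trick is to observe that the orientation of each edge is \emph{determined} by the structure: in a monotone circuit every edge runs from a vertex of smaller ``layer'' toward $v_{out}$, but to avoid relying on layering I would instead incorporate the in/out relation directly by working with the canonical bipartite-like incidence encoding only if needed; cleaner, though, is to note that a directed acyclic orientation is recoverable from an MSO$_1$-definable relation once we mark, for each edge, a subdivision vertex or — most simply — to just add to the signature a binary ``arc'' predicate, which is legitimate because clique-width of a graph and of its natural digraph encoding differ by at most a constant. (The cleanest route is: the problem is really about the \emph{directed} graph, and clique-width for directed graphs is defined analogously; the bound from Lemma~\ref{treewidthGenus} still applies since treewidth bounds clique-width up to an exponential, but here we only need clique-width of the input as a parameter, so any of these constant-blowup encodings is fine.) The key point to make rigorous is merely that the relevant parameter ``clique-width of the underlying undirected graph'' controls the clique-width of whatever labeled structure we feed to the LinEMSOL$_1$ engine.

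For step (ii), I would write $\varphi(\mathcal{S})$ as the conjunction of the three properties already spelled out in the proof of Theorem~\ref{theo:dp}: namely $v_{out}\in\mathcal{S}$; for every $v\in\mathcal{S}$, if $v$ is an \texttt{AND}-gate then all in-neighbors lie in $\mathcal{S}$ and if $v$ is an \texttt{OR}-gate then some in-neighbor lies in $\mathcal{S}$; and the dual maximality conditions for $v\notin\mathcal{S}$. Each of these is a first-order sentence over the vertex set using the arc predicate and the type labels, hence certainly MSO$_1$. The objective is to minimize $|\mathcal{S}\setminus I|$, i.e. the linear function $\sum_{v} \mathrm{w}(v)\,[v\in\mathcal{S}]$ where $\mathrm{w}(v)=0$ if $v$ carries the \texttt{input} label and $\mathrm{w}(v)=1$ otherwise — precisely the form a LinEMSOL$_1$ objective is allowed to take. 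By the correspondence established in Theorem~\ref{theo:dp} between satisfying assignments and their activation sets, the minimum of this objective over all $\mathcal{S}$ satisfying $\varphi$ equals $MinEC(C)$, so comparing it with $k$ decides {\sc MinEC$^+_M$}. Invoking the LinEMSOL$_1$ algorithm then gives the claimed FPT running time in the clique-width.

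The main obstacle — really the only subtlety worth dwelling on — is the MSO$_1$ (not MSO$_2$) restriction: we must make sure the edge orientation is available without quantifying over edges. I expect to handle this by arguing that the input, being a circuit, naturally comes as a \emph{relational} structure with an arc relation, and that the parameter we are given is the clique-width of this relational structure (equivalently, up to constants, the clique-width of its underlying undirected graph), so no encoding gymnastics are strictly necessary; the formula $\varphi$ then uses the arc relation directly when it says ``in-neighbor.'' Everything else is routine: the properties are bounded-quantifier first-order statements, and the weighting is linear, so the hypotheses of the Courcelle–Makowsky–Rotics theorem are met verbatim.
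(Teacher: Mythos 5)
Your proposal follows essentially the same route as the paper, which likewise just observes that the activation-set characterization from Theorem~\ref{theo:dp} is LinEMSOL$_1$-expressible and invokes the Courcelle--Makowsky--Rotics theorem; your write-up is in fact more explicit about the labeled signature, the arc relation, and the weighted linear objective than the paper's one-line argument. One caveat: your parenthetical claim that the clique-width of the directed structure is ``equivalent up to constants'' to that of the underlying undirected graph is false in general (tournaments have underlying undirected clique-width $2$ but unbounded directed clique-width), so the parameter must be taken --- as the paper does in its conclusion --- to be the clique-width of the directed graph/circuit itself, which is exactly the ``cleanest route'' you already identify.
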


\section{Conclusions}

The energy complexity measure represents an interesting manner to an\-a\-lyse the activation of gates through a circuit. Previous works address energy complexity in threshold circuits as a model that simulates a neural network. By analyzing such a measure restricted to Boolean circuits, recent bounds in circuit complexity analysis were presented in \cite{dinesh2020new} and~\cite{sun2021relationship}. 

In this paper, we introduce the discussion of energy complexity problems in terms of time complexity. We investigate the time complexity to computing the best-case energy complexity among satisfying assignments of monotone circuits (denoted by {\sc MinEC}$^{+}_{M}$) and its respective version parameterized by the size of the solution ($k$-{\sc MinEC}$^{+}_{M}$). 
We prove that {\sc MinEC}$^{+}_{M}$ is NP-complete on planar graphs, and that $k$-{\sc MinEC}$^{+}_{M}$ is W[1]-hard, but in XP.
In addition, we show that {\sc MinEC}$^{+}_{M}$ is fixed-parameter tractable when parameterized either by the treewidth of the input circuit or by the genus of the input plus the size of the solution.
Besides that, we also remark that $k$-{\sc MinEC}$^{+}_{M}$ is FPT when restricted to instances with bounded in-degree, and that {\sc MinEC}$^{+}_{M}$ is FPT concerning the clique-width of the directed graph/circuit $C$.

\medskip

Since $k$-{\sc MinEC}$^{+}_{M}$ is in XP, we left open whether {\sc $k$-MinEC}$^{+}_{M}$ is in W[P]. Also, the parameterized complexity of $k$-{\sc MinEC}$^{+}_{M}$ on circuits that are planar but have {\tt NOT} gates seems also interesting. In addition, the (parameterized) complexity of computing the worst-case energy complexity of circuits that are not monotone is also interesting. 

Finally, we remark that there are several interesting energy complexity problems that can arise varying from the options ``worst-case or best-case'',  ``general assignments or satisfying assignments'', ``monotone or non-monotone'', and  ``planar or non-planar''.


\section*{Acknowledgement}


This research has received funding from Rio de Janeiro Research Support Foundation (FAPERJ) under grant agreement E-26/201.344/2021,  National Council for Scientific  and Technological Development (CNPq) under grant agreement 309832/2020-9, 
and the European Research Council (ERC) under the European Union's Horizon $2020$ research and innovation programme under grant 
\begin{minipage}{0.8\textwidth}
agreement CUTACOMBS (No. $714704$).
\end{minipage}
\begin{minipage}{0.2\textwidth}
        \includegraphics[scale=.3]{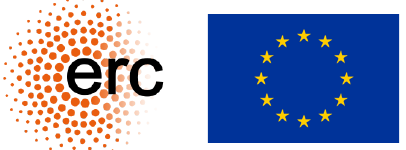}
\end{minipage}

\bibliographystyle{plainurl} 
\bibliography{mybibfile}

\end{document}